\DeclarePairedDelimiter\floor{\lfloor}{\rfloor}
\newtheorem{theorem}{Theorem}
\begin{document}

\title{Topological correlations in three dimensional classical Ising models: an exact solution with a continuous phase transition}
\author{Zhiyuan Wang}
\affiliation{Department of Physics and Astronomy, Rice University, Houston, Texas 77005,
  USA}
\affiliation{Rice Center for Quantum Materials, Rice University, Houston, Texas 77005, USA}
\author{Kaden R.~A. Hazzard}
\affiliation{Department of Physics and Astronomy, Rice University, Houston, Texas 77005,
USA}
\affiliation{Rice Center for Quantum Materials, Rice University, Houston, Texas 77005, USA}
\date{\today}

\begin{abstract}
We study a three-dimensional~(3D) classical Ising model that is exactly solvable when some coupling constants take certain imaginary values.
The solution combines and generalizes the Onsager-Kaufman solution~\cite{Onsager1944,Kaufman1949} of the 2D Ising model and the solution of Kitaev's honeycomb model~\cite{Kitaev2006}, leading to a three-parameter phase diagram with a third order phase transition between two distinct phases. 
Interestingly, the phases of this model are distinguished by topological features: the expectation value of a certain family of loop observables depend only on the topology of the loop~(whether the loop is contractible), and are quantized at rational values that differ in the two phases. 
We show that a related exactly solvable 3D classical statistical model with real coupling constants also shows the topological features of one of these phases. 
Furthermore, even in the model with complex parameters, the partition function has some physical relevance, as it can be interpreted as the transition amplitude of a quantum dynamical process and may shed light on dynamical quantum phase transitions.  %
\end{abstract}
\maketitle

\section{Introduction}
Understanding the universal behavior of classical many-body systems near their critical points is a central goal %
of classical statistical mechanics. Although this is a difficult problem in general, in one and two spatial dimensions, significant insights have been provided by exactly solved models~\cite{baxter2016exactly}. One important open problem is to generalize these solutions to three-dimensional~(3D) systems with realistic short-range interactions. Despite a long effort with some preliminary results~\cite{suzuki1972solution,zamolodchikov1980tetrahedra,bazhanov1992new,huang1997exact,dhar2008exact,mangazeev2013integrable}, no physical 3D model has been exactly solved that displays a genuinely 3D phase transition~\footnote{Among the models constructed in Refs.~\cite{suzuki1972solution,zamolodchikov1980tetrahedra,bazhanov1992new,huang1997exact,dhar2008exact,mangazeev2013integrable}, only the models in  Refs.~\cite{suzuki1972solution,huang1997exact} have phase transitions, and in these the 3D partition function factorizes into a product of partition functions of 2D systems, giving the phase transitions an essentially 2D character.}.

In this paper, we make progress in this direction by exactly solving a classical Ising model on a special 3D lattice, as depicted in Fig.~\ref{fig:hc0}, 
although with the caveat that the model has imaginary coupling constants. 
The transfer matrix of this system has a  structure  similar to  a non-Hermitian version of the 2D Kitaev honeycomb model~\cite{Kitaev2006}, and the partition function can be obtained using the representation theory of the so($2N$) Lie algebra and the corresponding Lie group.
The solution displays a third order phase transition between two distinct phases, and near the critical point we can exactly obtain a critical exponent of the model. 

The phases are interesting in their own right, as they are distinguished by topological properties. Specifically, there is a family of loop observables whose 
 expectation values distinguish the two phases and are equal to some rational numbers~($0$, $1$, or $1/3$) depending on the topology of the loop. 

Despite its complex coupling constants~(also a complication of some previous approaches~\cite{zamolodchikov1980tetrahedra,bazhanov1992new}), our findings have physical relevance. First, we show in Sec.~\ref{sec:physical_model} that the topological features discovered in one of the phases of the model with complex couplings also exist in a similar exactly solvable model with real-valued couplings. More speculatively, it is possible more generally that the long-distance property of our model belongs to the same universality class of certain physical 3D classical systems. 
It remains an open question whether the other phase of our model can also be  reproduced in a physical system, but if there indeed exists a physical classical system that has the two phases mentioned above and a phase transition between them, then the concept of universality suggests that the long-distance behaviors and the critical exponent we obtain here will apply to
such physical systems.

As another point of physical relevance for the model with complex couplings, in Sec.~\ref{sec:quantum_amplitude} we show two constructions that realize the partition function $Z$ of our model in certain dynamical processes of a 3D quantum system: one is to map $Z$ to the transition amplitude between a family of product states, the other is to realize $Z$ as the coherence of a probe spin coupled to the whole 3D system. Both constructions in principle allow the free energy to be experimentally measured, albeit with an exponentially small signal. Under these mappings, the phase transition of our model corresponds to a dynamical quantum phase transition~(DQPT)~\cite{heyl2013dynamical,Heyl_2018}, a phenomenon that has gained much attention recently. Statistical mechanics with complex configuration energies also appears in  the study of Lee-Yang zeros~\cite{yang1952statistical,lee1952statistical,wei2012lee,peng2015experimental}, non-Hermitian quantum systems~\cite{moiseyev2011non,Gong2018Topological,Ashida2020Non}, and complex conformal field theories~\cite{faedo2020holographic}.   

Our paper is organized as follows. In Sec.~\ref{sec:model} we define our model and a family of loop observables of interest. In Sec.~\ref{sec:solution} we present the exact solution of  the model: in Sec.~\ref{sec:TM} we derive the transfer matrix of the classical model, in Sec.~\ref{sec:map_fermion} we use a spin-fermion mapping to reduce the problem to a free fermion problem, in Sec.~\ref{sec:solve_fermion_TM} we solve the eigenvalues of the free fermion transfer matrix and calculate the thermodynamic free energy, in Sec.~\ref{sec:phase_boundary} we obtain the phase diagram, in Sec.~\ref{sec:critical_exp} we calculate a critical exponent, and in Sec.~\ref{sec:TPloop} and Sec.~\ref{sec:loop_observables} we calculate the expectation values of loop observables and demonstrate their topological properties. In Sec.~\ref{sec:justification} we give two physical implications of our model: the existence of a physical classical phase with similar topological behaviors~(Sec.~\ref{sec:physical_model}), and realizations of the partition function in quantum dynamical processes~(Sec.~\ref{sec:quantum_amplitude}). %
In Sec.~\ref{sec:summary} we summarize our results. 
The Appendices contain technical results used throughout our arguments. 

\section{The Model}\label{sec:model}
\begin{figure}
	\center{\includegraphics[width=0.9\linewidth]{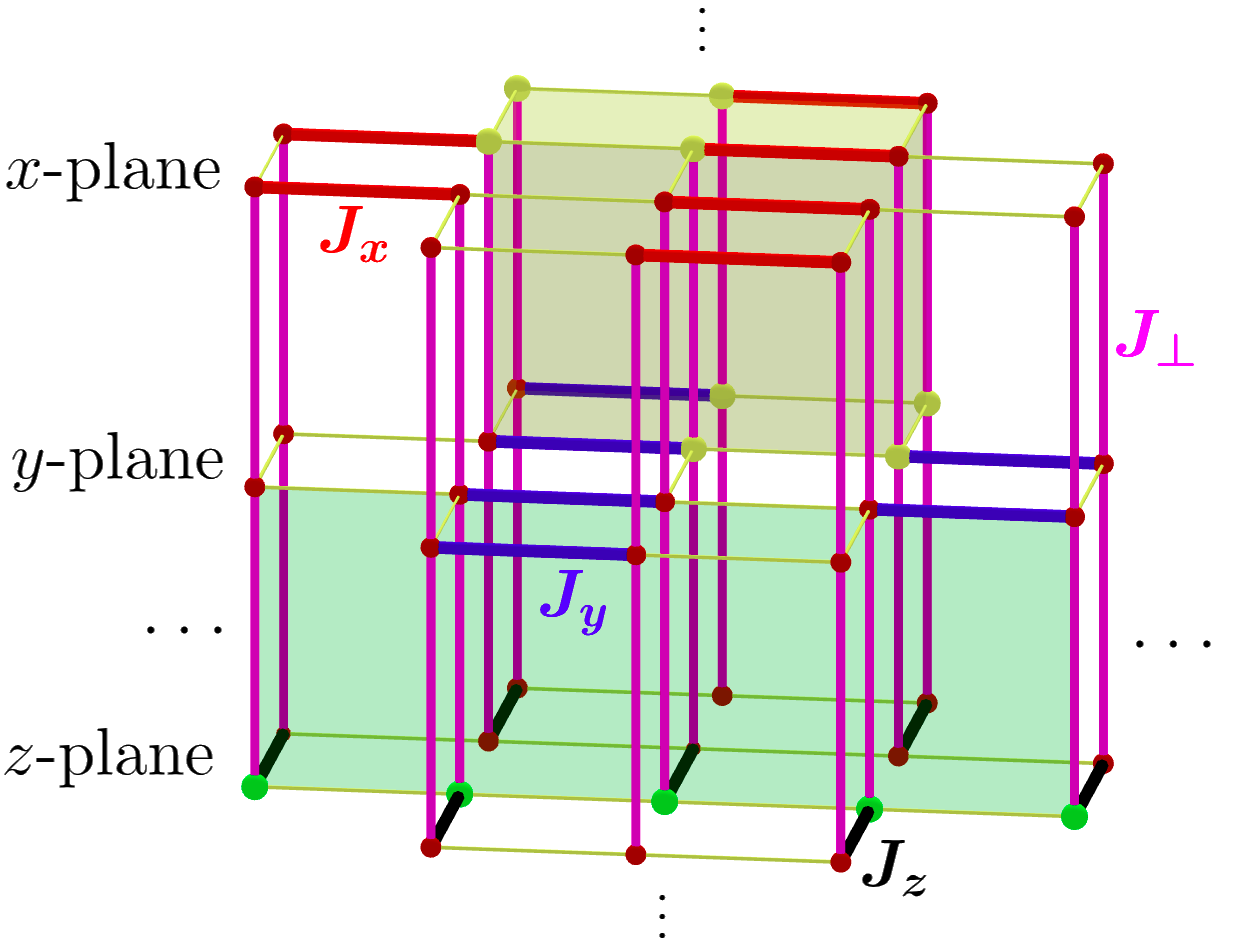}}%
	\caption{\label{fig:hc0} Definition of the model and loop observables. The classical system sits on a 3D stacking of the brick wall lattice, of arbitrarily large extent in each direction. Classical spins lie on vertices, and they only interact via the thicker links. The horizontal links~(red, blue, black) have real coupling constants $J_x,J_y,J_z$, for $x$-planes, $y$-planes, and $z$-planes, respectively. The coupling constant $J_\perp$ for the vertical links~(pink) and the  external field $h$ are imaginary when the solvability condition Eq.~\eqref{cond:exact} is met. The yellow shaded cuboid shows an example of the loop observable $\sigma[\mathfrak{L}_{(xy)}]$ for a contractible loop $\mathfrak{L}$~(here being an elementary plaquette), which is equal to the product of Ising spins on the larger yellow vertices~[see Eq.~\eqref{eq:def_loop_observable}]. Similarly, the green shaded rectangle shows an example of $\sigma[\mathfrak{L}_{(yz)}]$ for a noncontractible loop, extended infinitely to the right and to the left.
	}
\end{figure}
In this section we define our model and the class of physical observables we are interested in.
The model is defined on a 3D stacking of the 2D brick wall lattice, with classical Ising spins, $\sigma_j \in \{-1, +1\}$, lying on vertices $j$, as shown in Fig.~\ref{fig:hc0}, and we use periodic boundary conditions~(PBC) for all the three directions for simplicity. Nearest neighbor Ising-type interactions exist only on a subset of links in this lattice, which are shown in Fig.~\ref{fig:hc0} as thick red, blue, black, and pink links. The  energy of the system for a specific classical spin configuration is
\begin{eqnarray}\label{eq:HIK}
  H[\{\sigma\}]&=&-J_x\sum_{\langle ij\rangle\in \mathbf{X}}\sigma_{i}\sigma_{j}-J_y\sum_{\langle ij\rangle\in \mathbf{Y}}\sigma_{i}\sigma_{j}\\
  &&-J_z\sum_{\langle ij\rangle\in \mathbf{Z}}\sigma_{i}\sigma_{j}-J_\perp\sum_{\langle ij\rangle\in\boldsymbol{\perp}}\sigma_{i}\sigma_{j}+h\sum_i\sigma_i,\nonumber
 \end{eqnarray}
where $\mathbf{X}$ denotes the set of all thick links on $x$-planes, and similarly for $\mathbf{Y},\mathbf{Z}$, while $\boldsymbol{\perp}$ is the set of all the vertical links in Fig.~\ref{fig:hc0}, and the external field $h$ acts on all spins. 
The goal is to find the partition function
\begin{eqnarray}\label{eq:Z}
Z(K_x,K_y,K_z,K_\perp,\beta h)=\sum_{\{\sigma\}}e^{-\beta H[\{\sigma\}]},
\end{eqnarray}
where $K_i=\beta J_i,i=x,y,z,\perp$. The free energy is related to the partition function by
\begin{equation}\label{def:free_energy0}
	F=-k_B T \ln Z.
\end{equation}
The model is exactly solvable when the following conditions hold:
\begin{equation}\label{cond:exact}
4J_\perp\beta\equiv \pi i~(\mathrm{mod}~2\pi i),~~~2h\beta\equiv \frac{\pi i}{2}~(\mathrm{mod}~ 2\pi i).
\end{equation}
After imposing these solvability conditions, there remains a three-dimensionless-parameter space~$(K_x,K_y,K_z)$ of solutions.

Beyond the free energy~(and its derivatives), we also consider the thermal expectation values of a family of loop observables that are products of $\sigma_j$s on closed loops, defined by the following procedure:\\
(1) Choose a loop $\mathfrak{L}$ on the 2D brick wall lattice~($\mathfrak{L}$ must consist of edges of the brick wall lattice);\\
(2) Choose two nearest neighbor planes of type $\alpha$ and $\beta$ of the 3D lattice, denoted $(\alpha \beta)$, which can be $(xy),(yz)$ or $(zx)$;\\
(3) Denote by $\mathfrak{L}_{(\alpha\beta)}$ the graph consisting of all sites in the loop $\mathfrak{L}$ of both $\alpha$ and $\beta$ planes and the edges of the lattice joining pairs of these sites;\\ 
(4) For a lattice site $i\in \mathfrak{L}_{(\alpha\beta)}$, denote by $\bar{i}$ the same site of the other plane~(if $i\in \alpha$, then $\bar{i}\in \beta$ and vice versa);\\
(5) For  $i\in \mathfrak{L}_{(\alpha\beta)}$, define $n(i)$ to be the number of thick horizontal edges in $\mathfrak{L}_{(\alpha\beta)}$ linked to $i$~[notice that $n(i)\in\{0,1\}$];\\
(6) The loop product is defined as 
\begin{equation}\label{eq:def_loop_observable}
	\sigma[\mathfrak{L}_{(\alpha\beta)}]=\prod_{i\in \mathfrak{L}_{(\alpha\beta)}} \sigma_i^{n(\bar{i})}.
\end{equation}
In Fig.~\ref{fig:hc0} we illustrate the definition of $\sigma[\mathfrak{L}_{(\alpha\beta)}]$ for a contractible and a noncontractible loop $\mathfrak{L}$.
In Sec.~\ref{sec:TPloop} we will compute their thermal expectation values
\begin{eqnarray}\label{eq:thermal_loop}
\langle\sigma[\mathfrak{L}_{(\alpha\beta)}]\rangle=\frac{1}{Z}\sum_{\{\sigma\}}\sigma[\mathfrak{L}_{(\alpha\beta)}] e^{-\beta H[\{\sigma\}]}.
\end{eqnarray}
We will see  that the expectation values of these observables are sensitive to the topology of the loop $\mathfrak{L}_{(\alpha\beta)}$. Namely, for a contractible loop $\mathfrak{L}_{(\alpha\beta)}$ we have $\langle\sigma[\mathfrak{L}_{(\alpha\beta)}]\rangle=\pm 1$~(and the same loop $\mathfrak{L}_{(\alpha\beta)}$ takes the same value for different phases), while for a non-contractible loop $\mathfrak{L}_{(\alpha\beta)}$, $\langle\sigma[\mathfrak{L}_{(\alpha\beta)}]\rangle$ is equal to $0$ in one phase~(the $A$-phase) and $-1/3$ in another phase~(the $B$-phase). Therefore, noncontractible loop observables can be used as order parameters of this model.

\section{The solution}\label{sec:solution}
\begin{figure}
	\center{\includegraphics[width=0.75\linewidth]{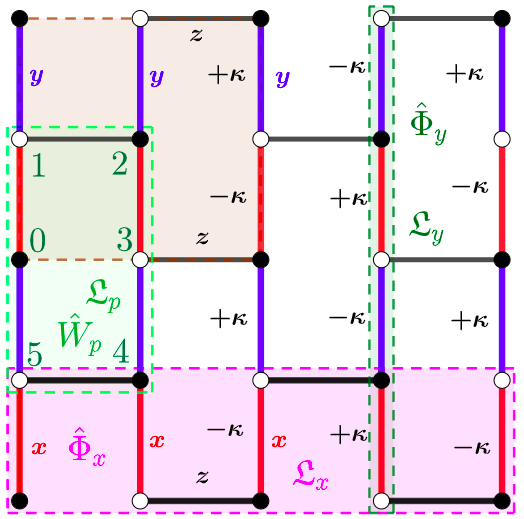}}
	\caption{\label{fig:brickwall-unitcell} The 2D brick wall lattice on which the transfer matrix Eq.~\eqref{eq:def_TM} is defined. A unit cell is shown in the shaded square. The conserved loop operator $\hat{W}_p$ acts on the six spins of the elementary plaquette $p$, and the conserved noncontractible loop $\hat{\Phi}_x$~($\hat{\Phi}_y$) acts on a row~(column) of spins. The $\pm \kappa$ shown next to each link is the real part of the small perturbation to the link coupling constant needed to gap the fermionic spectrum of the $B$-phase. }
\end{figure}  
\subsection{The Transfer Matrix}\label{sec:TM}
The first step  to solve this model is to find the transfer matrix $\hat{T}$ for each period of $x,y,z$ planes, as shown in Fig.~\ref{fig:hc0}, defined so that $Z=\mathrm{Tr}[\hat{T}^M]$, where $M$ is the total number of periods. We will show that when the conditions~\eqref{cond:exact} are satisfied, the transfer matrix is
\begin{eqnarray}\label{eq:def_TM}
  \hat{T}&=&\exp\left(K_x\sum_{\langle ij\rangle\in X_{2D}}\hat{\sigma}^x_{i}\hat{\sigma}^x_{j}\right)\exp\left(K_y\sum_{\langle ij\rangle\in Y_{2D}}\hat{\sigma}^y_{i}\hat{\sigma}^y_{j}\right)\nonumber\\
  &&\times\exp\left(K_z\sum_{\langle ij\rangle\in Z_{2D}}\hat{\sigma}^z_{i}\hat{\sigma}^z_{j}\right),
\end{eqnarray}
where $\hat{\sigma}^{x,y,z}_i$ are Pauli operators acting on the spin located at site $j$ of the 2D brick wall lattice shown in Fig.~\ref{fig:brickwall-unitcell}, $X_{2D}$ denotes the set of all the $x$-links shown in Fig.~\ref{fig:brickwall-unitcell}, and similarly for $Y_{2D},Z_{2D}$. Henceforth, we use $\sum_{x}$, $\sum_{y}$, and $ \sum_{z}$ as abbreviations for  $\sum_{(i,j) \in  X_{2D}}$, $\sum_{(i,j) \in  Y_{2D}}$, and $\sum_{(i,j) \in  Z_{2D}}$, respectively.
We prove Eq.~\eqref{eq:def_TM} by inserting resolutions of identity on each plane in $Z=\mathrm{Tr}[\hat{T}^M]$ in Eq.~\eqref{eq:def_TM} and showing that it reproduces
Eq.~\eqref{eq:Z}. The trick here is that when inserting resolution of identity, we use the $\hat{\sigma}^x$ basis $|X\rangle\equiv\otimes_j|\sigma_j\rangle_x$ on $x$-planes, $\hat{\sigma}^y$ basis $|Y\rangle\equiv\otimes_j|\sigma_j\rangle_y$ on $y$-planes, and $\hat{\sigma}^z$ basis $|Z\rangle\equiv\otimes_j|\sigma_j\rangle_z$ on $z$-planes, where $\hat{\sigma}^x_j|\sigma_j\rangle_x=\sigma_j|\sigma_j\rangle_x$ and similarly for $|\sigma_j\rangle_y,|\sigma_j\rangle_z$. Therefore, we have
\begin{widetext}
\begin{eqnarray}\label{eq:quantum_classical_map}
  \mathrm{Tr}[\hat{T}^M]&=&\sum_{\substack{X_1,Y_1,Z_1,\ldots,\\ X_M,Y_M,Z_M}}\langle X_1|e^{K_x\sum_{x}\hat{\sigma}^x_i\hat{\sigma}^x_j}|Y_1\rangle \langle Y_1|e^{K_y\sum_{y}\hat{\sigma}^y_i\hat{\sigma}^y_j}|Z_1\rangle \langle Z_1|e^{K_z\sum_{z}\hat{\sigma}^z_i\hat{\sigma}^z_j}|X_2\rangle\langle X_2|\cdots\nonumber\\
                        &&~~~~~~~~{}\times|X_M\rangle\langle X_M|e^{K_x\sum_{x}\hat{\sigma}^x_i\hat{\sigma}^x_j}|Y_M\rangle \langle Y_M|e^{K_y\sum_{y}\hat{\sigma}^y_i\hat{\sigma}^y_j}|Z_M\rangle \langle Z_M|e^{K_z\sum_{z}\hat{\sigma}^z_i\hat{\sigma}^z_j}|X_1\rangle\nonumber\\
                        &=&\sum_{\{\sigma\}}\exp\left(K_x\sum_{\langle ij\rangle\in \mathbf{X}} \sigma_i \sigma_j+K_y\sum_{\langle ij\rangle\in \mathbf{Y}} \sigma_i \sigma_j+K_z\sum_{\langle ij\rangle\in \mathbf{Z}} \sigma_i \sigma_j\right)\nonumber\\
                        &&~~~~{}\times\langle X_1|Y_1\rangle \langle Y_1|Z_1\rangle \langle Z_1|X_2\rangle\cdots
  \langle X_M|Y_M\rangle \langle Y_M|Z_M\rangle \langle Z_M|X_1\rangle.
\end{eqnarray}
\end{widetext}
The first factor corresponds to the classical Boltzmann weight contributed by all the horizontal links. For the overlap matrices in the last line of Eq.~\eqref{eq:quantum_classical_map}, using a suitable phase convention for basis states 
\begin{eqnarray}
	|\pm 1\rangle_z&=&\{|\uparrow\rangle,e^{\pi i/4}|\downarrow\rangle \},\nonumber\\
	|\pm 1\rangle_x&=&\{\frac{|\uparrow\rangle+|\downarrow\rangle}{\sqrt{2}}e^{3\pi i/4},\frac{|\uparrow\rangle-|\downarrow\rangle}{\sqrt{2}}e^{\pi i/2} \}\nonumber\\
	|\pm 1\rangle_y&=&\{\frac{|\uparrow\rangle+i|\downarrow\rangle}{\sqrt{2}}e^{-3\pi i/4},\frac{|\uparrow\rangle-i|\downarrow\rangle}{\sqrt{2}}e^{-\pi i/2} \},\nonumber
\end{eqnarray}
we have $${}_x\langle \sigma|\sigma'\rangle_y={}_y\langle \sigma|\sigma'\rangle_z ={}_z\langle \sigma|\sigma'\rangle_x  =\frac{1}{\sqrt{2}}e^{\frac{\pi i}{4}(\sigma\sigma'-\frac{\sigma+\sigma'}{2}+3)}.$$ The overlaps give the Boltzmann weights contributed by the vertical links and external field terms with $\beta J_\perp=\pi i/4,\beta h=\pi i/4$, up to an irrelevant constant shift of the energy. Also, one can show that adding $2\pi i $ to $4J_\perp\beta$ or $2h\beta$ will only multiply the partition function by an irrelevant overall constant phase factor, since whenever we flip a spin $\sigma_j$, the imaginary part of $\beta H[\{\sigma\}]$ changes by $\pm 2( 2s J_\perp-  h )\beta $, where $s=(\sigma'_j+\sigma''_j)/2\in\{-1,0,+1\}$, and $\sigma'_j$~($\sigma''_j$) is the neighbor of $\sigma_j$ lying above~(below) it. Therefore the model Eq.~\eqref{eq:HIK} has transfer matrix Eq.~\eqref{eq:def_TM} when Eq.~\eqref{cond:exact} is satisfied.

Now that we have obtained the transfer matrix $\hat{T}$ of our model, the next step is to calculate the largest~(in magnitude) eigenvalue $\Lambda_{\max}$ of $\hat{T}$, which governs the free energy in the thermodynamic limit 
\begin{equation}\label{def:free_energy}
	F =-M k_B T \ln\Lambda_{\max}+O(\Lambda_1^M/\Lambda_{\max}^M),
\end{equation}
where $\Lambda_1$ is the next-to-largest~(in magnitude) eigenvalue of $\hat{T}$. We will calculate the eigenvalues of $\hat{T}$ in two steps: in Sec.~\ref{sec:map_fermion} we map the transfer matrix $\hat{T}$ to a free fermion transfer matrix $\hat{T}'$ in Eq.~\eqref{eq:Tbondfermion}, and then in Sec.~\ref{sec:solve_fermion_TM} we solve the eigenvalues of this free fermion transfer matrix.

\subsection{Mapping to a free fermion problem}\label{sec:map_fermion}
Our goal in this section is to map the transfer matrix $\hat{T}$ to a free fermion transfer matrix $\hat{T}'$, written in terms of Majorana fermion bilinear operators. While this can be accomplished by Kitaev's original technique~\cite{Kitaev2006}, or by using a Jordan-Wigner transformation~\cite{Feng2007JordanWigner},
here we use the algebraic method developed in Refs.~\cite{Nussinov2009bond,Cobanera2011bond,Chapman2020characterizationof,Ogura2020geometric}, which is far simpler. The key idea of this technique is that, instead of considering the mapping of each individual spin operators, we view the interaction term on each link $\langle ij\rangle$ as a whole, and consider the algebra generated by all these terms. We write the transfer matrix as 
\begin{equation}\label{eq:Tbond}
	\hat{T}=e^{K_x\sum_{x}\hat{\gamma}_{ij}}e^{K_y\sum_{y}\hat{\gamma}_{ij}}e^{K_z\sum_{z}\hat{\gamma}_{ij}},
\end{equation}
where the bond operators are defined as $\hat{\gamma}_{ij}=\hat{\sigma}_i^\alpha\hat{\sigma}_j^\alpha$ if $\langle ij\rangle$ is an $\alpha$-link in the 2D brick wall lattice. We now construct another transfer matrix
\begin{eqnarray}\label{eq:Tbondfermion}
	\hat{T}'&=&e^{K_x\sum_{x}\hat{\gamma}'_{ij}}e^{K_y\sum_{y}\hat{\gamma}'_{ij}}e^{K_z\sum_{z}\hat{\gamma}'_{ij}}\nonumber\\
	&\equiv& e^{K_x\sum_{x}u_{ij}i\hat{c}_i\hat{c}_j}e^{K_y\sum_{y}u_{ij}i\hat{c}_i\hat{c}_j}e^{K_z\sum_{z}u_{ij}i\hat{c}_i\hat{c}_j},
\end{eqnarray}
which has exactly the same exponential structure  and the same set of parameters as $\hat{T}$, but has the bond operators replaced by Majorana fermion bilinears $\hat{\gamma}'_{ij}\equiv u_{ij}i\hat{c}_i\hat{c}_j$ on each link, where $\hat{c}^\dagger_i=\hat{c}_i$, and $\{\hat{c}_i,\hat{c}_j\}=2\delta_{ij}$.
Here $u_{ij}$ is a real number defined independently on each link,  whose value is to be determined later. Notice that the ordering of Majorana operators $\hat{c}_i\hat{c}_j$ matters in the sum since they anti-commute; throughout this paper, we use the convention that whenever we sum~(or product) over links, each link $\langle ij\rangle$ appears only once in the sum, with $i$ representing an even site~(black dots in Fig.~\ref{fig:brickwall-unitcell}) and $j$ representing an odd site ~(white open circles in Fig.~\ref{fig:brickwall-unitcell}), and we always order $\hat{c}_i$ to the left unless otherwise stated. 

The goal now is to choose these real coefficients $\{u_{ij}\}$ \textit{such that the algebra generated by $\{\hat{\gamma}_{ij}\}$ is isomorphic to the algebra generated by $\{\hat{\gamma}'_{ij}\}$.} Once this is done, Refs.~\cite{Nussinov2009bond,Cobanera2011bond,Chapman2020characterizationof,Ogura2020geometric} claim that there exists a unitary mapping $\hat{U}$ between the two systems such that $\hat{\gamma}'_{ij}=\hat{U} \hat{\gamma}_{ij}\hat{U}^\dagger$ for all links $\langle ij\rangle$~(we will also need to check that the Hilbert space dimensions of the two systems are the same), leading to $\hat{T}'=\hat{U}\hat{T}\hat{U}^\dagger$, i.e. $\hat{T}$ and $\hat{T}'$ have the same eigenvalues.  Requiring the two algebras to be isomorphic  means that any algebraic relation satisfied by the generators $\{\hat{\gamma}_{ij}\}$, say $f(\{\hat{\gamma}_{ij}\})=0$, must be satisfied by $\{\hat{\gamma}'_{ij}\}$ as well, $f(\{\hat{\gamma}'_{ij}\})=0$, and vice versa. In our case, this leads to the following four families of relations:\\
\textit{Relation 1.} We have $\hat{\gamma}^2_{ij}=1$ for each link $\langle ij\rangle$, and therefore we must require $\hat{\gamma}'^2_{ij}=u_{ij}^2=1$, which constrains $u_{ij}$ to be $\pm 1$.\\
\textit{Relation 2.} Two bond operators anti-commute if and only if they share exactly one vertex, otherwise, they commute. It is straightforward to check that this is satisfied by both $\{\hat{\gamma}_{ij}\}$ and $\{\hat{\gamma}'_{ij}\}$, so this condition puts no  constraints on $\{u_{ij}\}$.\\
\textit{Relation 3.} The product of $\hat{\gamma}'_{ij}$ on any closed loop $\mathfrak{L}$ is equal to a constant, so the  product of $\hat{\gamma}_{ij}$ on $\mathfrak{L}$ must be equal to the same constant. It is enough to require this constraint only on all the elementary plaquettes $\mathfrak{L}_p$ along with two large loops $\mathfrak{L}_x$ and $\mathfrak{L}_y$ winding around the torus~(as shown in Fig.~\ref{fig:brickwall-unitcell}),
since the product on other loops decompose into products on these elementary loops. The product of $\hat{\gamma}'_{ij}$ on these loops are equal to 
\begin{eqnarray}\label{eq:Wpphixphiy}
	W_p&\equiv& \prod_{\langle ij\rangle\in \mathfrak{L}_p} \hat{\gamma}'_{ij}=\prod_{\langle ij\rangle\in \mathfrak{L}_p} u_{ij},\nonumber\\
	\Phi_x&\equiv & \prod_{\langle ij\rangle\in \mathfrak{L}_x} \hat{\gamma}'_{ij}=\prod_{\langle ij\rangle\in \mathfrak{L}_x} u_{ij},\nonumber\\
	\Phi_y&\equiv& \prod_{\langle ij\rangle\in \mathfrak{L}_y} \hat{\gamma}'_{ij}=\prod_{\langle ij\rangle\in \mathfrak{L}_y} u_{ij},
\end{eqnarray}
for every plaquette $p$, and we order the product of operators according to their linear order in the loop~(the orientation of the loop and the initial point do not affect the result of the product).

The product of $\hat{\gamma}_{ij}$ on these loops are equal to 
\begin{eqnarray}\label{eq:phixphiy}
	\hat{W}_p&\equiv& \prod_{\langle ij\rangle\in \mathfrak{L}_p} \hat{\gamma}_{ij}=-\hat{\sigma}^z_0\hat{\sigma}^y_1\hat{\sigma}^y_2\hat{\sigma}^z_3\hat{\sigma}^x_4\hat{\sigma}^x_5,\nonumber\\
	\hat{\Phi}_x&\equiv & \prod_{\langle ij\rangle\in \mathfrak{L}_x} \hat{\gamma}_{ij}=-\prod_{i\in \mathfrak{L}_x} \hat{\sigma}^y_i,\nonumber\\
	\hat{\Phi}_y&\equiv& \prod_{\langle ij\rangle\in \mathfrak{L}_y} \hat{\gamma}_{ij}=-\prod_{i\in \mathfrak{L}_y} \hat{\sigma}^z_i,
\end{eqnarray}
where $0,1,2,3,4,5$  label the sites of the plaquette $p$, as shown in Fig.~\ref{fig:brickwall-unitcell}~(and similarly for all other plaquettes).
Although the RHS of Eq.~\eqref{eq:phixphiy} are not constants, one can check that these operators mutually commute, and they commute with all the bond operators $\hat{\gamma}_{ij}$, and therefore they commute with the transfer matrix $\hat{T}$. They play the role of conserved observables, and their common eigenspaces are invariant under the action of $\hat{T}$. Further, since $\hat{W}_p^2=\hat{\Phi}_x^2=\hat{\Phi}_y^2=1$, their eigenvalues can only be $\pm 1$. 
To guarantee the algebraic isomorphism between the algebras $\{\hat{\gamma}_{ij}\}$ and $\{\hat{\gamma}'_{ij}\}$, we need to 
map the spin model transfer matrix $\hat{T}$ in each common eigenspace of $\{\hat{W}_p,\hat{\Phi}_x,\hat{\Phi}_y\}$ to a different fermionic transfer matrix  $\hat{T}'$, with the $u_{ij}$ chosen in such a way that their loop products $\{W_p,\Phi_x,\Phi_y\}$ equal the eigenvalues of $\{\hat{W}_p,\hat{\Phi}_x,\hat{\Phi}_y\}$.\\
\textit{Relation 4.}
On a closed manifold, the product of all $\{\hat{\gamma}_{ij}\}$ on the lattice equals a constant:
\begin{eqnarray}\label{eq:prod_all_bond}
\prod_{\text{all }\langle ij\rangle} \hat{\gamma}_{ij}=i^{4L_xL_y}=1,
\end{eqnarray}
where $L_x$~($L_y$) is the system size in the $x$-~($y$-) direction.
Similarly, the product of all $\{\hat{\gamma}'_{ij}\}$ is
\begin{eqnarray}\label{eq:prod_all_bond_f}
\prod_{\text{all }\langle ij\rangle} \hat{\gamma}'_{ij}=
\hat{P}_f\prod_{\text{all }\langle ij\rangle}  u_{ij},
\end{eqnarray}
where
$\hat{P}_f\equiv \prod_{z}\left(-i\hat{c}_i\hat{c}_j\right)$
is the conserved fermion parity operator. Therefore the algebraic isomorphism restricts the fermion model to the eigen-subspace of $\hat{P}_f$ with eigenvalue
\begin{equation}\label{eq:parity_restriction}
	P_f= \prod_{\text{all }\langle ij\rangle} u_{ij}.
\end{equation}
\textit{Summary and consistency check.} In summary, the mutually commuting conserved operators $\{\hat{W}_p,\hat{\Phi}_x,\hat{\Phi}_y\}$ split the full Hilbert space into a direct sum of their common eigen-subspaces, and the transfer matrix $\hat{T}$ leaves each subspace invariant. In the subspace labeled by the conserved eigenvalues $\{W_p,\Phi_x,\Phi_y\}$, the transfer matrix $\hat{T}$ is mapped to a fermionic transfer matrix $\hat{T}'$ defined in Eq.~\eqref{eq:Tbondfermion} where the parameters $u_{ij}=\pm 1$ are chosen to satisfy Eq.~\eqref{eq:Wpphixphiy}~\footnote{While there are exponentially many solutions $\{u_{ij}\}$ to  Eq.~\eqref{eq:Wpphixphiy} for a fixed configuration $\{W_p,\Phi_x,\Phi_y\}$, all of them are equivalent up to a gauge transformation, and the spectrum of $\hat{T}'$ only depends on the values of $\{W_p,\Phi_x,\Phi_y\}$.}, and $\hat{T}'$ is restricted to a fixed fermion parity sector satisfying Eq.~\eqref{eq:parity_restriction}. 

As a consistency check, let us verify that the subspace dimension of the spin and fermionic systems, mapped to each other by the above algebraic isomorphism, are the same. For the spin system, we have $4L_xL_y$ qubit degrees of freedom~(d.o.f.) in total; in each subspace, the constraint Eq.~\eqref{eq:phixphiy} removes $2L_x L_y-1+2$ qubit d.o.f~($-1$ because the product of all $\hat{W}_p$ is a constant, so only $2L_x L_y-1$ of them are independent), leaving us with $2L_x L_y-1$ qubit d.o.f. For the fermionic system, we have $4L_xL_y$ Majorana fermions in total, which amounts to $2L_xL_y$ Dirac fermion d.o.f.; the fermion parity restriction Eq.~\eqref{eq:parity_restriction} further removes one of them, leaving us $2L_x L_y-1$ Dirac fermion d.o.f.. Therefore the Hilbert space dimension of the two systems are the same, both equal to $2^{2L_x L_y-1}$.

\subsection{Solving the free fermion transfer matrix}\label{sec:solve_fermion_TM}
In the last section we mapped the transfer matrix $\hat{T}$ in each sector labeled by $\{W_p,\Phi_x,\Phi_y\}$ to a free fermion transfer matrix $\hat{T}'$ in Eq.~\eqref{eq:Tbondfermion}, where $u_{ij}=\pm 1$ are chosen to satisfy Eq.~\eqref{eq:Wpphixphiy}, and the fermion parity satisfies Eq.~\eqref{eq:parity_restriction}. Now we solve these free fermion problems in each sector to get the full spectrum of $\hat{T}$. The difficulty here is that there are exponentially many such sectors~($2^{2L_x L_y+1}$ in total), most of which are not translationally invariant and can only be solved numerically. Fortunately we are most interested in the sector that contains the principal eigenvalue $\Lambda_\mathrm{max}$ of $\hat{T}$, i.e. the sector $\{W_p,\Phi_x,\Phi_y\}$ where the principal eigenvalue of $\hat{T}'$ is largest, since $\Lambda_\mathrm{max}$~(and the corresponding principal eigenstate $|\Lambda_\mathrm{max}\rangle$) determines the thermodynamic properties of the original classical system. 
In App.~\ref{appen:Lieb} we prove a generalization of Lieb's optimal flux theorem~\cite{lieb1994flux} for the transfer matrix $\hat{T}'$, which shows that for real $K_x,K_y,K_z$, the principal eigenvalue of $\hat{T}'$ is maximized by a configuration $\{W_p,\Phi_x,\Phi_y\}$ where all $W_p$ are equal to $+1$. From now on we will call such a configuration  vortex-free, and for a configuration with some $W_p=-1$ we say it has a vortex excitation at $p$. This leaves four sectors to consider, corresponding to $(\Phi_x,\Phi_y)=(++),(+-),(-+),(--)$~[we use $(++)$ as a shorthand for $(+1,+1)$, and similarly for the other three]. These four sectors can be treated in an identical way, which we do in the following.

We first need to find a solution $\{u_{ij}\}$ to Eq.~\eqref{eq:Wpphixphiy}. For the $(++)$ sector, we can simply take  $u_{ij}=+1$ for all links $\langle ij\rangle$. To obtain solutions for the other three vortex-free sectors, %
notice that we can flip the sign of $\Phi_x$ or $\Phi_y$ by flipping the signs of $u_{ij}$ on a large~(i.e. non-contractible) loop of links, without changing the value of any $W_p$. For example, if we flip  all the $z$-links between $x=L_x-1/2$ and $x=0$~(denote this set of links by $Z_{L_x-1/2,0}$), then we can flip the sign of $\Phi_x$ without flipping any of  the $W_p$. Similarly we can flip the sign of $\Phi_y$ by flipping the signs of all the $y$-links between $y=L_y-1/2$ and $y=0$~(denote this set of links by $Y_{L_y-1/2,0}$). In this way, the solution for the sector $(\Phi_x, \Phi_y)$ can be taken as $u_{ij}=1$ for $\langle ij\rangle\notin Z_{L_x-1/2,0} \cup Y_{L_y-1/2,0} $, $u_{ij}=\Phi_x$ for $\langle ij\rangle\in Z_{L_x-1/2,0} $, and $u_{ij}=\Phi_y$ for $\langle ij\rangle\in Y_{L_y-1/2,0}$.  %

The transfer matrix defined in Eq.~\eqref{eq:Tbondfermion} for all these four sectors can be written in a translationally invariant way provided that we use suitable boundary conditions for the Majorana operators.  To this end, we use $i=(\vec{r},\lambda)$ to label lattice sites, where $\vec{r}$ labels the unit cells, and  $\lambda=0,1,2,3$ label the sites in a unit cell, as shown in Fig.~\ref{fig:brickwall-unitcell}. We define $\hat{c}_{(L_x,y),\lambda}=\Phi_x\hat{c}_{(0,y),\lambda}$ and $\hat{c}_{(x,L_y),\lambda}=\Phi_y\hat{c}_{(x,0),\lambda}$, corresponding to periodic or antiperiodic boundary conditions.
Then the transfer matrices for all the four vortex-free sectors have the same expression
\begin{equation}\label{eq:Ttilde}
	\hat{T}'=e^{K_x\sum_{x}i\hat{c}_i\hat{c}_j}e^{K_y\sum_{y}i\hat{c}_i\hat{c}_j}e^{K_z\sum_{z}i\hat{c}_i\hat{c}_j}.
\end{equation}
where the above boundary condition on $\hat{c}_i,\hat{c}_j$ is used, and it is understood that the lattice coordinates of $i,j$ for each link $\langle ij\rangle$ should be consecutive numbers, e.g. the term on a flipped $z$-link is understood as $ \hat{c}_{(L_x-1,y),\lambda} \hat{c}_{(L_x,y),\lambda'}$ instead of $\hat{c}_{(L_x-1,y),\lambda} \hat{c}_{(0,y),\lambda'}$.

The rest of the task is to find the eigenvalues of the translationally invariant vortex-free transfer matrix $\hat{T}'$ in Eq.~\eqref{eq:Ttilde} under the four possible boundary conditions $(++),(+-),(-+),(--)$. To this end, we introduce the Fourier transform  of the Majorana operators
\begin{eqnarray}\label{eq:Ftrans}
  \hat{a}_{\vec{q},\lambda}&=&\frac{1}{\sqrt{2N}}\sum_{\vec{r}} e^{-i\vec{q}\cdot\vec{r}}\hat{c}_{\vec{r},\lambda},\nonumber\\
  \hat{c}_{\vec{r},\lambda}&=&\sqrt{\frac{2}{N}}\sum_{\vec{q}} e^{i\vec{q}\cdot\vec{r}}\hat{a}_{\vec{q},\lambda},
\end{eqnarray}
where $N=L_x L_y$ is the total number of unit cells. 
The quasi-momentum in the $\alpha$-direction $q_\alpha$ is quantized as  $2n\pi/L_\alpha$ where $n\in\mathbb{Z}$ if $\Phi_\alpha=+1$ and $n\in\mathbb{Z}+1/2$ if $\Phi_\alpha=-1$ . The operators $\hat{a}_{\vec{q},\lambda}$ satisfy $\hat{a}_{\vec{q},\lambda}^\dagger=\hat{a}^{\phantom{\dagger}}_{-\vec{q},\lambda}$ and $\{\hat{a}^{\phantom{\dagger}}_{\vec{p},\lambda},\hat{a}^\dagger_{\vec{q},\mu}\}=\delta_{\vec{p},\vec{q}}\delta_{\lambda,\mu}$. We can now rewrite $\hat{T}'$ as
\begin{eqnarray}\label{eq:TtildeFF}
  \hat{T}'&=&\exp\left[2K_x\sum_{\vec{q}}(i\hat{a}_{\vec{q},0}\hat{a}_{-\vec{q},1}+i\hat{a}_{\vec{q},2}\hat{a}_{-\vec{q},3})\right]\nonumber\\
             &&\times\exp\left[2K_y\sum_{\vec{q}}(i\hat{a}_{\vec{q},0}\hat{a}_{-\vec{q},1}e^{iq_y}+i\hat{a}_{\vec{q},2}\hat{a}_{-\vec{q},3}e^{-iq_y})\right]\nonumber\\
  &&\times\exp\left[2K_z\sum_{\vec{q}}(i\hat{a}_{\vec{q},2}\hat{a}_{-\vec{q},1}+i\hat{a}_{\vec{q},0}\hat{a}_{-\vec{q},3}e^{iq_x})\right],\nonumber\\
  &\equiv &\tilde{T}_{0}\prod_{\vec{q}+}\tilde{T}_{\vec{q}},
\end{eqnarray}
where $\tilde{T}_{0}$ contains all the terms with $\vec{q}\equiv -\vec{q}~(\mathrm{mod}~ 2\pi)$, and $\prod_{\vec{q}+}$ is the product over $\vec{q}$ with $\vec{q}\not\equiv -\vec{q}~(\mathrm{mod}~ 2\pi)$ such that each pair $\pm\vec{q}$ appears exactly once, and in the last line we have rearranged terms of different $\vec{q}$ modes using $[\tilde{T}_{\vec{q}},\tilde{T}_{0}]=0$, and $[\tilde{T}_{\vec{q}},\tilde{T}_{\vec{p}}]=0$ for $\vec{q}\neq\pm \vec{p}$. Because of this commutativity, all the $\tilde{T}_{\vec{q}}$ and $\tilde{T}_{0}$ can be simultaneously diagonalized. We treat $\tilde{T}_{\vec{q}}$ first, which can be written as 
\begin{eqnarray}\label{def:Tq}
\tilde{T}_{\vec{q}}&=&e^{2K_x \sum_{\lambda,\mu}P^{(\vec{q})}_{\lambda\mu}\hat{a}^\dagger_{\vec{q}\lambda}\hat{a}^{\phantom{\dagger}}_{\vec{q}\mu}}e^{2K_y \sum_{\lambda,\mu} Q^{(\vec{q})}_{\lambda\mu}\hat{a}^\dagger_{\vec{q}\lambda}\hat{a}^{\phantom{\dagger}}_{\vec{q}\mu}}\nonumber\\
&&\times e^{2K_z \sum_{\lambda,\mu} R^{(\vec{q})}_{\lambda\mu}\hat{a}^\dagger_{\vec{q}\lambda}\hat{a}^{\phantom{\dagger}}_{\vec{q}\mu}},
\end{eqnarray}
where the $4\times 4$ matrices $P^{(\vec{q})},Q^{(\vec{q})},R^{(\vec{q})}$ are~(we drop the superscript $\vec{q}$ when there is no confusion)
\begin{eqnarray}\label{def:PQR}
P&=&\begin{pmatrix}
	0 & i & 0 & 0 \\
	-i& 0 & 0 & 0 \\
	0 & 0 & 0 & i \\
	0 & 0 & -i& 0
\end{pmatrix},
R=\begin{pmatrix}
	0 & 0 & 0 & ie^{-iq_x} \\
	0 & 0 & -i & 0 \\
	0 & i & 0 & 0 \\
	-ie^{iq_x} & 0 & 0 & 0
\end{pmatrix},\nonumber\\
Q&=&\begin{pmatrix}
	0 & ie^{-iq_y} & 0 & 0 \\
	-ie^{iq_y}& 0 & 0 & 0 \\
	0 & 0 & 0 & ie^{iq_y} \\
	0 & 0 & -ie^{-iq_y}& 0
\end{pmatrix}.
\end{eqnarray}
Notice that the fermion bilinears $\hat{a}^\dagger_{\vec{q}\lambda}\hat{a}^{\phantom{\dagger}}_{\vec{q}\mu}$ in Eq.~\eqref{def:Tq} form the basis of an $\mathfrak{sl}(4)$ Lie algebra, so $\tilde{T}_{\vec{q}}$ is an element of the corresponding $\mathrm{SL}(4)$ Lie group. Using the relation between the fundamental representation and the free fermion representation of this Lie algebra and group, (similar to the method in App.~\ref{appen:numerical_method}), one can show that 
\begin{equation}\label{eq:Tq_diagonal}
	\tilde{T}_{\vec{q}}=e^{\epsilon_{\vec{q},1}(\hat{n}_{\vec{q},1}-\hat{n}_{\vec{q},\bar{1}})+\epsilon_{\vec{q},2}(\hat{n}_{\vec{q},2}-\hat{n}_{\vec{q},\bar{2}})},
\end{equation}
where $ e^{\pm\epsilon_{\vec{q},1}}, e^{\pm\epsilon_{\vec{q},2}}$ are the eigenvalues of the matrix $T_{\vec{q}}=e^{2K_x P}e^{2K_yQ}e^{2K_zR}$, which is the representation of $\tilde{T}_{\vec{q}}$ in the fundamental representation of the $\mathrm{SL}(4)$ Lie group, and $\hat{n}_{\vec{q},j},\hat{n}_{\vec{q},\bar{j}}$ ~(with $j\in \{1,2\}$) are mutually commuting fermion number operators. The single mode energies $\epsilon_{\vec{q},j}$ can be analytically calculated by solving the quartic equation $P_{T_{\vec{q}}}(x)=0$, where $P_{T_{\vec{q}}}(x)$ is the degree four characteristic polynomial  of the $4\times 4$ matrix  $T_{\vec{q}}$. This quartic equation can be simplified to a quadratic one $z^2+Az+B=0$, where $z=(x+1/x)/2=\cosh\epsilon_{\vec{q},j}$~(for $j=1,2$), and 
\begin{eqnarray}\label{eq:chepsilon}
A&=&-2c_3(c_1c_2+s_1s_2 \cos q_y),\nonumber\\
B&=&\frac{1}{8}S_1S_2(3+C_3-2s_3^2\cos q_x)\cos q_y+\frac{1}{2} s_1^2 s_2^2\cos(2q_y)\nonumber\\
&&{}+\frac{1}{4}s_3^2(1-C_1C_2)\cos(q_x)+\frac{C_1+C_2+3C_3}{8}\nonumber\\
&&{}+\frac{C_1C_2}{4}+\frac{C_1C_2C_3}{8},
\end{eqnarray}
where $c_j=\cosh 2K_j,s_j=\sinh 2K_j,C_j=\cosh4K_j$, and $S_j=\sinh4K_j$.
Since the eigenvalues of $T_{\vec{q}}$ come in pairs $\pm\epsilon_{\vec{q},1}, \pm\epsilon_{\vec{q},2}$, we can assume without loss of generality that $0\leq \mathrm{Re}[\epsilon_{\vec{q},1}]\leq \mathrm{Re}[\epsilon_{\vec{q},2}]$. Then the maximal eigenvalue of $\tilde{T}_{\vec{q}}$ is $e^{\epsilon_{\vec{q},1}+\epsilon_{\vec{q},2}}$.

The term $\tilde{T}_{0}$ in the last line of Eq.~\eqref{eq:TtildeFF} is defined by %
$\tilde{T}_{0}=\prod_{\vec{q}\equiv \vec{0}~(\mathrm{mod}~\pi)}\tilde{T}_{0,\vec{q}}$ with
\begin{equation}\label{eq:T_0qxqy}
	\tilde{T}_{0,\vec{q}}=e^{2(K_x+K_ye^{iq_y})i\hat{a}_{\vec{q}0}\hat{a}_{\vec{q}1}(1-\hat{P}_{\vec{q}})} e^{2K_zi\hat{a}_{\vec{q}1}\hat{a}_{\vec{q}2}(1+\hat{P}_{\vec{q}}e^{iq_x})},
\end{equation}
where $\hat{P}_{\vec{q}}=4\hat{a}_{\vec{q},0}\hat{a}_{\vec{q},1}\hat{a}_{\vec{q},2}\hat{a}_{\vec{q},3}$. 
Using $\hat{a}^\dagger_{\vec{q},\lambda}=\hat{a}_{\vec{q},\lambda}, \hat{a}_{\vec{q},\lambda}^2=1/2$,
the eigenvalues of $\tilde{T}_{0,\vec{q}}$ can be straightforwardly obtained by diagonalizing  Eq.~\eqref{eq:T_0qxqy}, and one can show that the largest one happens to be equal to $e^{(\epsilon_{\vec{q},1}+\epsilon_{\vec{q},2})/2}$.  %

We have not yet taken into account the fermion parity restriction in Eq.~\eqref{eq:parity_restriction}. However, as we will see in Sec.~\ref{sec:TPloop},
this constraint changes $\ln\Lambda_{\mathrm{max}}$ by at most $O(\epsilon_{\vec{q},j})$, and therefore does not affect the free energy density in the thermodynamic limit. The largest eigenvalue $\Lambda_{\mathrm{max}}^{(\Phi_x,\Phi_y)}$ of $\hat{T}'$ is
\begin{eqnarray}\label{eq:lamdba_max}
	\ln\Lambda_{\mathrm{max}}^{(\Phi_x,\Phi_y)}&=&\frac{1}{2}\sum_{\vec{q}}(\epsilon_{\vec{q},1}+\epsilon_{\vec{q},2}),%
\end{eqnarray}
where $(\Phi_x,\Phi_y)\in\{(++),(+-),(-+),(--)\}$, and the RHS implicitly depends on $(\Phi_x,\Phi_y)$  through the quantization of $\vec{q}$. %
The largest eigenvalue $\Lambda_{\mathrm{max}}$ of $\hat{T}$ is the largest of these four. Regardless of which one is the largest, the free energy density~(per site) in the thermodynamic limit is
\begin{eqnarray}\label{eq:free_energy}
f\equiv\frac{F}{12MN}&=&-\frac{k_BT}{24N}\sum_{\vec{q}}(\epsilon_{\vec{q},1}+\epsilon_{\vec{q},2})\\
&=&-\frac{k_BT}{96\pi^2}\iint_{[-\pi,\pi]^2}(\epsilon_{\vec{q},1}+\epsilon_{\vec{q},2})~d^2q,\nonumber
\end{eqnarray}
where the free energy $F$ is defined in Eq.~\eqref{def:free_energy}.

\subsection{Excitations and phase boundaries}\label{sec:phase_boundary}
\begin{figure}
	\center{\includegraphics[width=0.7\linewidth]{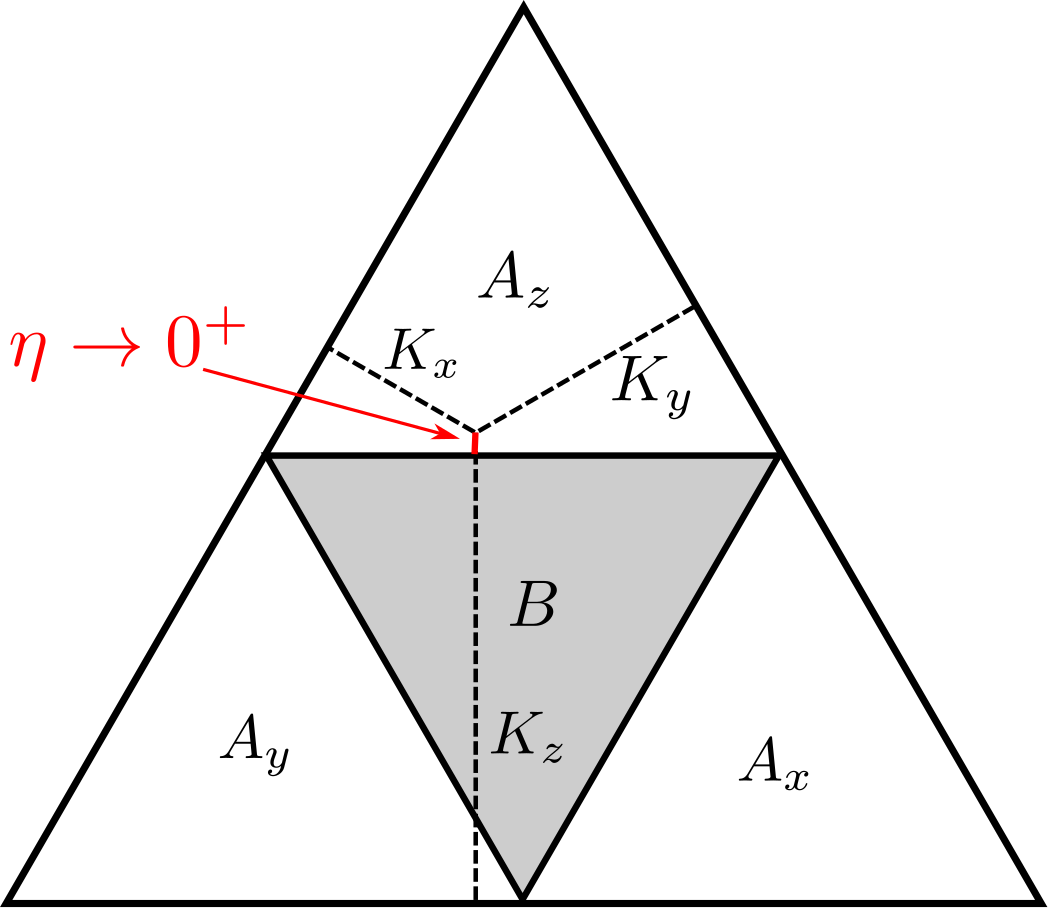}}
	\caption{\label{fig:critical} A 2D section of the 3D phase diagram of our model,  with the intersecting plane $K_x+K_y+K_z=\beta(J_x+J_y+J_z)=\mathrm{const.}$ The parameters $(K_x,K_y,K_z)$ of an arbitrary point in the diagram is given by the distance from that point to the three sides of the triangle. 
	The $B$-phase~(shaded) has a gapless transfer matrix, which acquires a gap after a small perturbation is introduced~(Sec.~\ref{sec:phase_boundary}).
	The $A$-region has a gapped transfer matrix and consists of three disjoint phases $A_x, A_y,A_z$. In Sec.~\ref{sec:critical_exp} we study the critical behavior of the free energy  as we approach the phase boundary  from the $A_z$-phase $\eta\equiv K_z - K_x - K_y \to 0^+$. }
\end{figure}
In this section we study other eigenvalues of the transfer matrix $\hat{T}$ beyond the principal eigenvalue, and, using this, determine the phase diagram of our model. It is useful to define an effective non-Hermitian Hamiltonian 
$	\hat{H}=-\ln \hat{T}.$
In this way the principal eigenstates of $\hat{T}$ are mapped to the ground states of $\hat{H}$ and the eigenvalues $\Lambda_j$ of $\hat{T}$  are related to excitation energies $E_j-E_0$ of $\hat{H}$ by  $E_j-E_0=\ln \Lambda_{\text{max}}-\ln\Lambda_j$. For the rest of this paper, we use the term ``excitation spectrum of $\hat{T}$'' to mean the excitation spectrum of $\hat{H}$, and call the transfer matrix ``gapped''~(``gapless'') if $\mathrm{Re}[E_j-E_0]$ is gapped~(gapless) in the thermodynamic limit. The spectral gap $\Delta=\min_{j\neq 0}\mathrm{Re}[E_j-E_0]$ plays an important role  in the physical properties of the original classical Ising model. First, as we will see in a moment, the phase boundary of our model is determined by regions where $\Delta$ vanishes. Secondly, although we do not calculate in this paper, we claim that two point connected correlations $\langle \sigma_i\sigma_j\rangle_c$~(or more generally, $\langle O_i O_j\rangle_c$ where $O_i$ is a product of classical  spins in a local region) decay exponentially in distance when $\Delta>0$, while there are algebraically decaying correlations when $\Delta=0$. 

There are two types of excitations: fermionic excitations, corresponding to the positive energy eigenmodes of the fermionic transfer matrix $\hat{T}'$, %
and vortex excitations, corresponding to eigenstates of $\hat{T}'$ in a different sector $\{W_p,\Phi_x,\Phi_y\}$ where some of $W_p$s are equal to $-1$. Vortices can only be created in pairs. 
A pair of vortices can be created by first drawing a segment connecting the two vortices~(the segment should avoid passing through lattice sites) and then flipping $u_{ij}$ on all the lattice edges intersecting with this segment~(similar to Kitaev's honeycomb~\cite{Kitaev2006} and toric code~\cite{Kitaev2003Fault} models).  Our analysis in App.~\ref{appen:Lieb} and the numerical results in App.~\ref{appen:numerical_vgap} suggest that the vortices have  gapped and positive excitation energies. %
On the other hand, the fermionic excitations can become gapless for certain values of $(K_x,K_y,K_z)$, and this determines the phase boundary of our model. 

We emphasize that it is the gap closing of the real part of $\epsilon_{\vec{q},1}$  that determines the phase boundary~\footnote{In fact, for real $K_x,K_y,K_z$, the single fermion energies $\epsilon_{\vec{q},1},\epsilon_{\vec{q},2}$ are real; so the distinction between  $\epsilon_{\vec{q},1}$ and $\mathrm{Re}[\epsilon_{\vec{q},1}]$ is unimportant here. In particular, one obtains the same phase diagram even if $\epsilon_{\vec{q},1}=0$ is used as a criterion for phase transition. }. This claim is based on the analysis in App.~\ref{appen:proof_analyticity}, where we rigorously prove that the free energy $f$ defined in Eq.~\eqref{eq:free_energy} is complex analytic in all its parameters when  $\mathrm{Re}[\epsilon_{\vec{q},1}]>0~\forall \vec{q}\in [-\pi,\pi]^2$.  The proof also suggests that when the gap closes $\mathrm{Re}[\epsilon_{\vec{q},1}]=0$, there are branch points in $\epsilon_{\vec{q},1}+\epsilon_{\vec{q},2}$ that leads to non-analytic behavior of $f$, which we calculate directly in Sec.~\ref{sec:critical_exp}.%

We find two distinct phases corresponding to whether $\mathrm{Re}[\epsilon_{\vec{q},1}]$ is gapped or gapless. The phase boundary is determined as follows. One can show that for fixed $q_y$ the minimum of 
$\mathrm{Re}[\epsilon_{\vec{q},1}]$ occurs at $q_x=0$~[since $\partial_{q_x}\epsilon_{\vec{q}}=f(\epsilon_{\vec{q}},q_y) \sin q_x $ for some positive function $f(\epsilon_{\vec{q}},q_y) $]. Furthermore, in the gapped phase the minimum of $\mathrm{Re}[\epsilon_{(0,q_y),1}]$ occurs either at $q_y=0$ or $q_y=\pi$. %
Therefore, the phase transition occurs when $\mathrm{Re}[\epsilon_{\vec{q},1}]$ vanishes at either $\vec{q}=(0,0)$ or $\vec{q}=(0,\pi)$, 
which happens when one of $K_x,K_y,K_z$ equals the sum of the other two~[this can be seen by diagonalizing $\tilde{T}_{\vec{q}}$ in Eq.~\eqref{def:Tq} at $\vec{q}=(0,0)$ or $(0,\pi)$]. When $K_x,K_y,K_z$ form three sides of a triangle~(we call this the $B$-region, shown as the shaded triangle in Fig.~\ref{fig:critical}), the spectrum is gapless, and when one of $K_x,K_y,K_z$  is bigger than the sum of the other two, the spectrum is gapped~(we call this the $A$-region, consisting of three disjoint white triangles in Fig.~\ref{fig:critical}). The phase diagram in terms of $K_x,K_y, K_z$ is shown in Fig.~\ref{fig:critical}, which is identical to the phase diagram of Kitaev's honeycomb model~\cite{Kitaev2006}. 

The fermionic spectrum of the $B$-phase  can be gapped by adding suitable perturbations. For example, we can add small imaginary parts to $J_x, J_y$, so that $K_x\to K_x+i\kappa,K_y\to K_y-i\kappa$, and then add a small real part to the coupling constants of the $x,y$ links that break the lattice reflection symmetry, in the pattern shown in Fig.~\ref{fig:brickwall-unitcell}. Here $\kappa$ is a small real number $|\kappa|\ll |K_{i}|,i=x,y,z$. (Notice that this corresponds to modifying the link coupling constants of the original classical statistical model on all the $x$ and $y$ planes, which breaks the reflection symmetry of the 3D lattice.)  App.~\ref{appen:gap_phases_B} proves that a subregion of the $B$-phase is gapped by this perturbation. More specifically, when $|K_z|/2<|K_x|=|K_y|$, we have $\Delta=\min_{\vec{q}}\mathrm{Re}[\epsilon_{\vec{q},1}]\propto\kappa^2$. This fact will be useful for Sec.~\ref{sec:TPloop} where we calculate the topological degeneracy of $\Lambda_{\max}$ and Sec.~\ref{sec:loop_observables} where we find loop observables whose expectation values distinguish the two phases. Notice that while our proof of Lieb's theorem in App.~\ref{appen:Lieb} assumes real $K_x,K_y,K_z$, as long as the vortices are gapped, the principal eigenstate is still in the vortex-free sector if $\kappa$ is sufficiently small, which we  assume throughout this paper. 

\subsection{Critical exponents}\label{sec:critical_exp}
In this section we study the critical behavior of our model near the phase boundary between the $A$ and $B$ phases, and show that this is a third order phase transition. Specifically, we parameterize the distance to the phase boundary by $\eta = K_z - K_x - K_y$ and show that as the phase boundary is approached from the $A$-phase side, $\eta \to 0^+$, the leading singular part of the free energy is $f \sim \eta^{5/2}$~\footnote{We are approaching the phase boundary strictly inside the big triangle, i.e. the parameters $K_x,K_y,K_z$ are all nonzero. If one instead approaches the point where two phase boundaries meet from along a side of the big triangle, then one can show that the transition is in 2D Ising universality class, where $f \sim \eta^{2}\ln\eta$.}.

We start from the expression in Eq.~\eqref{eq:free_energy}. Near the phase boundary, the leading singular part of $f$ is contributed by the integration near $\vec{q}=(0,\pi)$ where $\epsilon_{\vec{q},1}$ approaches zero. Letting $\vec{q}=(p_x,\pi+p_y)$ where $p_x,p_y\ll 1$, we expand $\epsilon^2_{\vec{q},1}$ in powers of the small parameters $p_x,p_y$, and $\eta$. Using Eq.~\eqref{eq:chepsilon} and $\cosh \epsilon\approx 1+\epsilon^2/2$ for $\epsilon\ll 1$, we have
\begin{eqnarray}
	\epsilon^2_{\vec{q},1}%
	&=& \frac{s^2_3}{4} p_x^2+2 \frac{s_1s_2}{s_3} \eta p_y^2+4\eta^2+\frac{s_1^2s_2^2}{4s_3^2}p_y^4\\
	&&{}+O(p_x^4)+O(p_x^2\eta)+O(p_x^2p_y^2)+O(p_y^4\eta), \nonumber
\end{eqnarray}
where the neglected terms will not affect the leading-order singularity. The leading singular part of $f$ is 
\begin{eqnarray}\label{eq:f_singular}
	f&\sim&-\frac{1}{48\pi^2\beta s_1s_2}\iint \sqrt{ p_x^2+2 \eta p_y^2+4\eta^2+\frac{p_y^4}{4}} ~dp_x dp_y\nonumber\\
	&\sim &\frac{1}{48\pi^2\beta s_1s_2}\int 2(p_y^2+4\eta)^2\ln(p_y^2+4\eta) ~ dp_y\nonumber\\
	&\sim &\frac{64}{45\beta\pi s_1s_2}\eta^{\frac{5}{2}},
\end{eqnarray}
where in the first line we rescale the integration variables $p_x,p_y$, the integration range is a fixed-length interval passing through the origin, say $[-\epsilon,\epsilon]^2$ with $0<\epsilon\ll 1$, and we use $\sim$ to indicate that an unimportant analytic part has been ignored. %
Therefore, the third derivative $\partial^3_\eta f$ diverges as $\eta\to 0^+$, i.e., the phase transition is third order. 
\subsection{Topological degeneracy}\label{sec:TPloop}
In this section we show that the largest eigenvalues of the transfer matrix $\hat{T}$ of our original spin model are topologically degenerate, and the degeneracy depends on the phase. This topological degeneracy gives rise to the topological behaviors of the loop observables presented in the next section.

To this end, we need to compare the values  $\Lambda_{\mathrm{max}}^{(\Phi_x,\Phi_y)}$ of  the four sectors $(\Phi_x,\Phi_y)\in\{(++),(+-),(-+),(--)\}$, given in Eq.~\eqref{eq:lamdba_max}. Let us focus on regions where $\mathrm{Re}[\epsilon_{\vec{q}}]$ is gapped, i.e. the $A$-region and the $B$-region with the perturbation discussed in Sec.~\ref{sec:phase_boundary}. 
In App.~\ref{appen:FSE} we show that the largest eigenvalues
 $\ln\Lambda_{\mathrm{max}}^{(\Phi_x,\Phi_y)}$ of each of the four sectors are equal up to an exponentially small correction $O(e^{-L/\xi})$, where  $\xi$ is a fixed correlation length. %
This suggests a 4-fold topological degeneracy since all the fermion and vortex excitations are gapped. However, we have not taken into account the fermion parity constraint yet. 
As we discussed in Sec.~\ref{sec:map_fermion}, only those eigenstates of $\hat{T}'$ that satisfy the fermion parity constraint Eq.~\eqref{eq:parity_restriction} correspond to eigenstates of $\hat{T}$. So the actual degeneracy of $\hat{T}$ is the number of ``parity-compatible'' sectors, i.e. sectors whose principal eigenstate $|\Lambda_{\mathrm{max}}^{(\Phi_x,\Phi_y)}\rangle$ satisfies the fermion parity constraint.
The fermion parity constraint Eq.~\eqref{eq:parity_restriction}, written in terms of $\hat{a}_{\vec{q}\lambda},\Phi_x,\Phi_y$, becomes
\begin{equation}\label{eq:parity_restriction2}
(-1)^{(L_x-1)L_y}\prod_{\substack{\vec{q}+}}P_{\vec{q},0} P_{\vec{q},1}P_{\vec{q},2}P_{\vec{q},3}\prod_{\vec{q}\equiv -\vec{q}} P_{\vec{q}} 
= \Phi_x^{L_y},
\end{equation}
where $P_{\vec{q},\lambda}=(1-2n_{\vec{q},\lambda})$, $P_{\vec{q}}=4\hat{a}_{\vec{q},0}\hat{a}_{\vec{q},1}\hat{a}_{\vec{q},2}\hat{a}_{\vec{q},3}$ and $\equiv$ is equality $\mathrm{mod}~2\pi$.  As we mentioned above Eq.~\eqref{eq:free_energy}, the principal eigenstate of $\tilde{T}_{\vec{q}}$ in Eq.~\eqref{def:Tq} always has $n_{\vec{q},1}=n_{\vec{q},2}=1$ and $n_{\vec{q},{\bar 1}}=n_{\vec{q},{\bar 2}}=0$, so we have  $P_{\vec{q},0} P_{\vec{q},1}P_{\vec{q},2}P_{\vec{q},3}=+1$ for $\vec{q}\not\equiv -\vec{q}$. Therefore, whether a sector $(\Phi_x, \Phi_y)$ is parity-compatible or not is determined by the values of $P_{\vec{q}}$ where $\vec{q}\equiv -\vec{q}$. 

There are only four possible $\vec{q}$ that can satisfy $\vec{q}\equiv -\vec{q}$~: $(0,0),(0,\pi),(\pi,0),(\pi,\pi)$. For the rest of this section, we assume that  $L_x,L_y$ are both even numbers~[we treat the other cases in App.~\ref{appen:TPD}; the conclusions are the same], in which case these four modes appear in the $(++)$ sector only. This means that Eq.~\eqref{eq:parity_restriction2} is trivially satisfied for the sectors $(+-),(-+),(--)$, i.e. $\hat{T}$ has at least a 3-fold degeneracy. For the $(++)$ sector, Eq.~\eqref{eq:parity_restriction2} becomes $P_{00}P_{0\pi}P_{\pi 0}P_{\pi\pi}=+1$.
The value of $P_{\vec{q}}$ for these four Majorana modes in the principal eigenstate  $|\Lambda_{\mathrm{max}}^{(++)}\rangle$ is determined by maximizing the $\tilde{T}_{0,\vec{q}}$ term in Eq.~\eqref{eq:T_0qxqy}.
It is straightforward to see that $P_{\pi0}=P_{\pi\pi}=-1$, $P_{00}=[K_z>K_x+K_y]$, and $P_{0\pi}=[K_z>|K_x-K_y|]$, where $[S]=+1$ if the statement $S$ is true and $[S]=-1$ otherwise.  In the $A$-phases, $P_{00}, P_{0\pi}$ are both $-1$~(for $A_x, A_y$) or both $+1$~(for $A_z$), so $P_{00}P_{0\pi}P_{\pi 0}P_{\pi\pi}=+1$ and $\hat{T}$ has a 4-fold degeneracy.  In the $B$-phase we have $P_{00}=-1, P_{0\pi}=+1$, so $P_{00}P_{0\pi}P_{\pi 0}P_{\pi\pi}=-1$, i.e. the sector $(++)$ is parity-incompatible, and $\hat{T}$ has a 3-fold degeneracy.

\subsection{Loop Observables}\label{sec:loop_observables}
In this section we compute the thermal expectation value of the family of loop observables $\sigma[\mathfrak{L}_{(\alpha\beta)}]$ defined in Eq.~\eqref{eq:def_loop_observable}, and verify our earlier claim that it is equal to $\pm 1$ for contractible loops,  $0$ for large loops in the $A$-phase, and $1/3$ for large loops in the gapped $B$-phase. 

We begin with a contractible loop $\mathfrak{L}_p$ being an elementary plaquette of the brickwall lattice. Using the transfer matrix method, we find
\begin{eqnarray}\label{eq:loop_transfer_matrix}
	\langle\sigma[\mathfrak{L}_{p,(\alpha\beta)}]\rangle&=&\mathrm{Tr}[\hat{W}_p \hat{T}^M]/\mathrm{Tr}[\hat{T}^M]\nonumber\\
	&\underset{M\to\infty}{=}& \frac{1}{D}\sum^D_{j=1}\langle \Lambda^{(L)}_{\max,j}|\hat{W}_p|\Lambda^{(R)}_{\max,j}\rangle\nonumber\\
	&=&+1. 
\end{eqnarray}
where $(\alpha\beta)\in\{(xy),(yz),(zx)\}$, the sum is over all the $D$-fold degenerate principal eigenstates, $\langle \Lambda^{(L)}_{\max,j}|$ and $|\Lambda^{(R)}_{\max,j}\rangle$ are the left and right principal eigenstates of $\hat{T}$, respectively. The last line of Eq.~\eqref{eq:loop_transfer_matrix} follows from the fact that the principal eigenstates of $\hat{T}$ are eigenstates of the conserved operator $\hat{W}_p$ with eigenvalue $+1$. The value of $\langle\sigma[\mathfrak{L}_{(\alpha\beta)}]\rangle$ on larger contractible loops can be calculated in a similar way, and the result is~(up to a possible minus sign) the expectation value of the product of $\hat{W}_p$ for all the plaquette $p$ enclosed by $\mathfrak{L}$. Since the $\hat{W}_p$ mutually commute and have eigenvalue $+1$ on the principal eigenstates, $\langle\sigma[\mathfrak{L}_{(\alpha\beta)}]\rangle$ is $\pm 1$ for contractible loops.  %

The behavior of $\langle\sigma[\mathfrak{L}_{(\alpha\beta)}]\rangle$ is more interesting on non-contractible loops. For a large loop $\mathfrak{L}_y$ parallel to the $y$-direction, as shown in Fig.~\ref{fig:hc0}, we have
\begin{eqnarray}\label{eq:loop_Ly_transfer_matrix}
	\langle\sigma[\mathfrak{L}_{y,(\alpha\beta)}]\rangle&=&-\mathrm{Tr}[\hat{\Phi}_y \hat{T}^M]/\mathrm{Tr}[\hat{T}^M]\nonumber\\
	&\underset{M\to\infty}{=}& -\frac{1}{D}\sum^D_{j=1}\langle \Lambda^{(L)}_{\max,j}|\hat{\Phi}_y|\Lambda^{(R)}_{\max,j}\rangle.	
\end{eqnarray} 
For $A$-phases, this is
\begin{eqnarray}\label{eq:loop_Ly_A}
	\langle\sigma[\mathfrak{L}_{y,(\alpha\beta)}]\rangle&=&-\frac{\langle \hat{\Phi}_y\rangle_{++}+\langle \hat{\Phi}_y\rangle_{+-}+\langle \hat{\Phi}_y\rangle_{-+}+\langle \hat{\Phi}_y\rangle_{--}}{4}\nonumber\\
	&=&0
\end{eqnarray} 
while for the gapped $B$-phase,
\begin{eqnarray}\label{eq:loop_Ly_B}
	\langle\sigma[\mathfrak{L}_{y,(\alpha\beta)}]\rangle&=&-\frac{\langle \hat{\Phi}_y\rangle_{+-}+\langle \hat{\Phi}_y\rangle_{-+}+\langle \hat{\Phi}_y\rangle_{--}}{3}\nonumber\\
	&=&\frac{1}{3}.
\end{eqnarray} 
The value of $	\langle\sigma[\mathfrak{L}_{x,(\alpha\beta)}]\rangle$ for a large loop $\mathfrak{L}_x$ parallel to the $x$-direction is mapped to $-\langle \hat{\Phi}_x\rangle$ [Eq.~\eqref{eq:phixphiy}] and can be calculated in an identical way, leading to the same result. 
We see that the value of $\langle\sigma[\mathfrak{L}_{(\alpha\beta)}]\rangle$ indeed distinguish between contractible and  non-contractible loops, are always quantized at rational values, and can be used as a (nonlocal) order parameter that distinguishes the phases. 

In order for the topological features to be a universal characteristic of the phase, rather than an accidental property (arising, for example, due to the model's solvability), they must be in some way robust against small, local perturbations. We argue that this is likely the case.
Notice that a local perturbation, e.g. a small real magnetic field term $B\sum_j \sigma_j$, in the original classical Ising model can be mapped to a local perturbation in the transfer matrix in  Eq.~\eqref{eq:def_TM}. The classical loop observables defined in Eq.~\eqref{eq:def_loop_observable} stills maps to the loop operators $\hat{W}_p,\hat{\Phi}_x,\hat{\Phi}_y$, but they no longer commute with the perturbed $\hat{T}$, and when they act on $|\Lambda^{(R)}_{\max,j}\rangle$ they create excitations along the loop. Consequently we expect the expectation value $\langle\sigma[\mathfrak{L}_{(\alpha\beta)}]\rangle$ to decay exponentially in the length of $\mathfrak{L}$. 

However, based on the robustness of the topological phases of the  2D quantum systems~(defined by the transfer matrix $\hat{T}$), we expect that there exists a family of perturbed loop observables~(whose definition depends on the perturbation) that have exactly the same properties shown above. The argument is based on the idea of quasi-adiabatic continuation~\cite{Hastings2005Quasiadiabatic}. For simplicity, let us assume $K_x,K_y,K_z\ll 1$ so that $\hat{T}$ can be approximated as a Hermitian operator. Then Ref.~\cite{Hastings2005Quasiadiabatic} shows that there exists a quasi-local unitary transformation $\hat{U}_\lambda$ that evolves the unperturbed principal eigenstates  to the perturbed ones $|\Lambda^{(R)}_{\max,j}\rangle_\lambda=\hat{U}_\lambda|\Lambda^{(R)}_{\max,j}\rangle_{\lambda=0}$, where $\lambda$ is the strength of the perturbation. [Roughly speaking, $\hat{U}_\lambda$ is a finite-time evolution by a locally-interacting Hamiltonian $\sum_{i}\hat{h}_i$ such that $t\|\hat{h}_i\|=O(\lambda)$, where $t$ is the total time duration.] Then the perturbed loop operators $\hat{U}_\lambda\{\hat{W}_p,\hat{\Phi}_x,\hat{\Phi}_y\}\hat{U}_\lambda^\dagger$ have exactly the same expectation values in the perturbed principal eigenstates as in the unperturbed solvable model shown above. And due to the quasi-locality of $\hat{U}_\lambda$, Lieb-Robinson bounds~\cite{Lieb1972,hastings2010locality} show that these perturbed operators are finite-width~(of order $v_{\text{LR}}t$, where $v$ is the Lieb-Robinson speed) extensions of the unperturbed ones. So we do expect robustness in this sense, essentially the same robustness of loop observables in quantum topological phases.

\section{Physical relevance of complex coupling constants }\label{sec:justification}
Although the complex coupling constants of Eq.~\eqref{cond:exact} appear unphysical, this section argues that the model nevertheless gives insights into genuine physical systems.

Foremost, we expect the general strategy of this paper -- finding 3D classical models whose transfer matrices can be solved using techniques previously applied to solvable 2D  quantum models -- to be a fruitful idea that may lead to a wealth of new solvable models, some of which may have real-valued energy. For example, Refs.~\cite{Chapman2020characterizationof,Ogura2020geometric,elman2020free} have classified families of quantum spin models that can be solved by mapping to free fermions, and these provide a fertile source for new 3D solvable models. 

As an example of this strategy, Sec.~\ref{sec:physical_model} shows that the $A$-phase of our model can be realized in a model with real coupling constants. This provides a physical model showing the topological properties.  As a speculative aside, we also note that this demonstrates that even models with complex-valued couplings may have the same universal physics as real-valued physical models, and thus the former may serve as windows into the latter. 

Additionally, Sec.~\ref{sec:quantum_amplitude} shows two different realizations of the partition function of our complex parameter Ising model in certain dynamical processes of a 3D quantum spin system. Both in principle allow the free energy of our model to be measured experimentally. They suggest that the statistical mechanics of Eqs.~(\ref{eq:HIK},\ref{eq:Z}) gives a solvable model of 3D DQPT~\cite{Heyl_2018} that display topological features. %

\subsection{Realization of $A$-phase in a model with real energy}\label{sec:physical_model}

The $A$-phase can be realized in a physical model with real energies, as we now show. Specifically, the model has a phase that reproduces the $A$-phase's topological properties, that contractible  loops have expectation value $\pm 1$ while noncontractible loops have expectation value $0$. 
 
\begin{figure}
	\center{\includegraphics[width=0.7\linewidth]{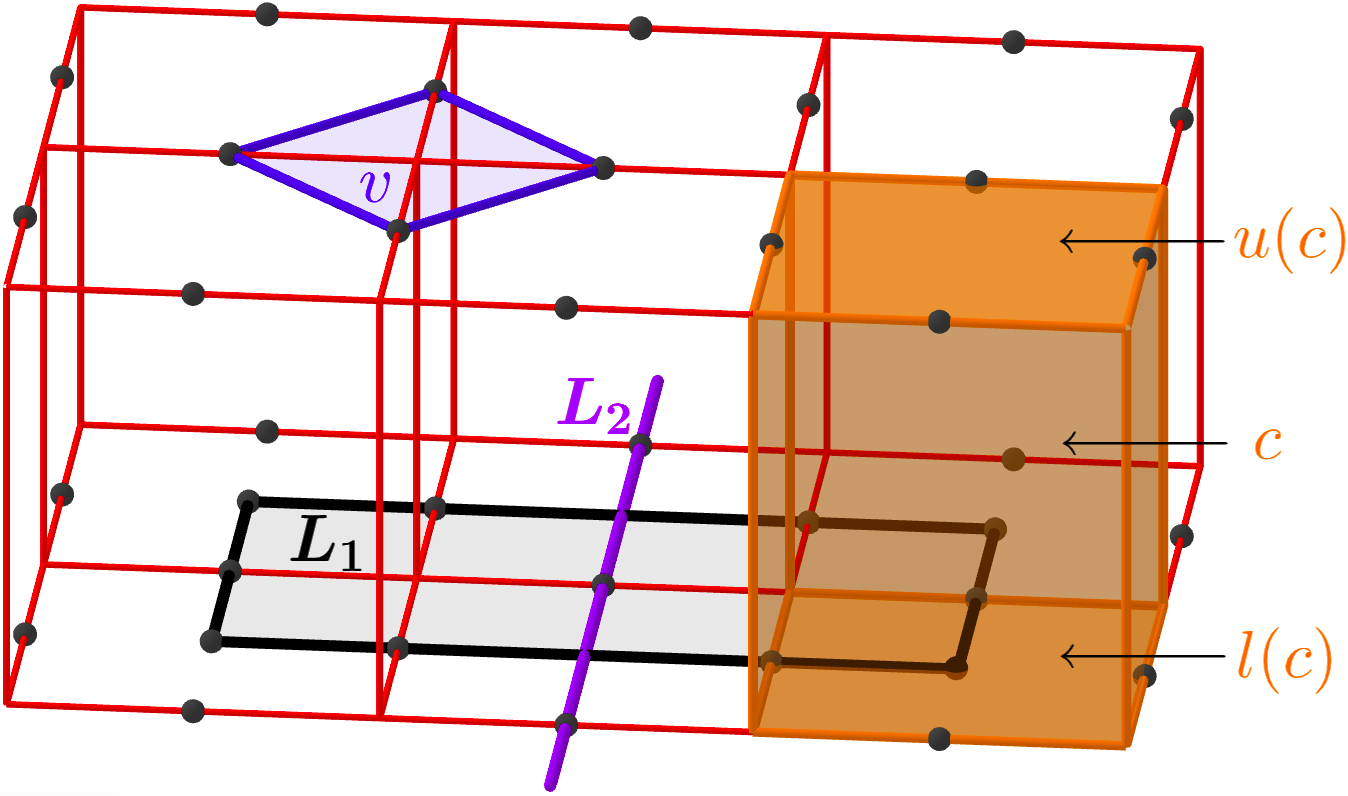}}
	\caption{\label{fig:KTC4} The model Eq.~\eqref{eq:H_KTC4} lies on a 3D cubic lattice where the classical spins sit on links in the $x$ and $y$ directions. There are four spin interactions $\sigma_{v,1}\sigma_{v,2}\sigma_{v,3}\sigma_{v,4}$ between spins around every lattice vertex $v$~(shown as blue diamond) and eight spin interactions $\epsilon[\sigma_{u(c)}, \sigma_{l(c)}]$ around every elementary cube $c$~(shown as orange cube). Example of a contractible loop $L_1$ is shown as black  square, and a noncontractible loop $L_2$ is shown as purple solid line.}
\end{figure}
Consider a 3D square lattice where there is one classical Ising spin on each link in the $x$ and $y$ directions, but no spins live on the links in the $z$ direction, as shown in Fig.~\ref{fig:KTC4}. The energy of a spin configuration $\{\sigma\}$ is given by
\begin{equation}\label{eq:H_KTC4}
E[\{\sigma\}]=-\sum_v \sigma_{v,1}\sigma_{v,2}\sigma_{v,3}\sigma_{v,4}-\sum_{c}\epsilon[\sigma_{\{u(c)\}}, \sigma_{\{l(c)\}}],
\end{equation}
where the first sum is over all vertices $v$, $\sigma_{v,1},\sigma_{v,2},\sigma_{v,3},\sigma_{v,4}$ denote the four spins linked to the vertex $v$, the second sum is over all cubes $c$, and $\{u(c)\},\{l(c)\}$ denote the upper and lower plaquettes of $c$, respectively. We use $\sigma_{\{p\}}=(\sigma_{p,1},\sigma_{p,2},\sigma_{p,3},\sigma_{p,4})$ to denote the configurations of the four spins of the plaquette $p$. The energy of the cube $c$ is defined as $\epsilon[\sigma_{\{u(c)\}}, \sigma_{\{l(c)\}}]=\ln \cosh(1)$ if $\sigma_{\{u(c)\}}=\sigma_{\{l(c)\}}$, $\epsilon[\sigma_{\{u(c)\}}, \sigma_{\{l(c)\}}]=\ln\sinh(1)$ if $\sigma_{\{u(c)\}}=-\sigma_{\{l(c)\}}$ while $\epsilon[\sigma_{\{u(c)\}}, \sigma_{\{l(c)\}}]=-\infty$ otherwise. 

The partition function is 
\begin{equation}\label{eq:Z_KTC4}
Z=\sum_{\{\sigma\}}e^{- E[\{\sigma\}]}=\mathrm{Tr}[\hat{T}^M],
\end{equation}
where the transfer matrix $\hat{T}$ is an operator acting on quantum spins lying on a 2D slice of the lattice, defined by
\begin{equation}\label{eq:T_KTC4}
\hat{T}=\exp\left(\sum_v \hat{\sigma}^z_{v,1}\hat{\sigma}^z_{v,2}\hat{\sigma}^z_{v,3}\hat{\sigma}^z_{v,4}+\sum_{p}\hat{\sigma}^x_{p,1}\hat{\sigma}^x_{p,2}\hat{\sigma}^x_{p,3}\hat{\sigma}^x_{p,4}\right),
\end{equation}
which is simply $e^{-\hat{H}}$ where $\hat{H}$ is the Hamiltonian of Kitaev's toric code model. The principal eigenstates of $\hat{T}$ are the 4-fold degenerate ground states of $\hat{H}$. 

Fig.~\ref{fig:KTC4} shows the family of loop observables we are interested in. Using the same method as in Sec.~\ref{sec:loop_observables}, these classical loop observables can be mapped to the conserved loop operators of the quantum toric code, and
the thermal expectation values of the former are mapped to quantum expectation values of the latter. Averaging over the four topologically degenerate principal eigenstates, we find that the expectation value of contractible loops is $+1$ while non-contractible loops have expectation value $0$. This  reproduces the topological behavior of the $A$-phase of the Ising model presented in  Sec.~\ref{sec:loop_observables}.

\subsection{Realizing the partition function in quantum dynamics}\label{sec:quantum_amplitude}

Another way in which classical statistical models with complex energy can be physically relevant is that the partition function $Z$ can be mapped to measurable quantities of certain~(unitary) quantum dynamical processes in 3D (not 2D) quantum systems. In this section we show two such constructions: Sec.~\ref{sec:transition_amplitude} shows how to realize $Z$ as a transition amplitude, while Sec.~\ref{sec:probe_spin_coherence} shows that $Z$ gives the quantum coherence of a probe spin-$1/2$ coupled to the whole system. The phase transition we studied in our model is then mapped to a DQPT in these quantities.

\subsubsection{Interpreting the partition function as a transition amplitude}\label{sec:transition_amplitude}
Consider a 3D quantum spin system on the same lattice as Fig.~\ref{fig:hc0}, and with a Hamiltonian given by Eq.~\eqref{eq:HIK} with all $\sigma_i$ replaced by $\hat{\sigma}_i^z$, and we will take all the parameters $J_x,J_y,J_z,J_\perp,h$ to be real to guarantee hermiticity. The quantum transition amplitude between two arbitrary states is
\begin{equation}
	\langle A|e^{-it\hat{H}}|B\rangle=\sum_{\{\sigma\}} e^{-it H[\{\sigma\}]}\langle A|\{\sigma\}\rangle\langle\{\sigma\}|B\rangle,
\end{equation}
where on the RHS we inserted a complete set of $\hat{\sigma}^z$ basis states. If the states $|A\rangle,|B\rangle$ are of the following form
\begin{equation}\label{eq:initial_final_states}
	|A\rangle=\bigotimes_{\langle ij\rangle\in \mathbf{X}}|\psi(A_x)\rangle_{ij}\bigotimes_{\langle ij\rangle\in \mathbf{Y}}|\psi(A_y)\rangle_{ij}\bigotimes_{\langle ij\rangle\in \mathbf{Z}}|\psi(A_z)\rangle_{ij},
\end{equation}
where $\bigotimes_{\langle ij\rangle\in \mathbf{X}}$ is over all the red thick $x$-links in  Fig.~\ref{fig:hc0}, and similarly for $\bigotimes_{\langle ij\rangle\in \mathbf{Y}}$ and $\bigotimes_{\langle ij\rangle\in \mathbf{Z}}$, and the local state on each link $\langle ij\rangle$ is defined as $|\psi(A)\rangle_{ij}=\frac{1}{2\sqrt{\cosh 2\mathrm{Re}(A)}}\sum_{\sigma_i,\sigma_j}e^{A\sigma_i\sigma_j}|\sigma_i,\sigma_j\rangle$. Note that Eq.~\eqref{eq:initial_final_states} defines product states since the thick links $\mathbf{X},\mathbf{Y},\mathbf{Z}$ are non-overlapping. Then we have
\begin{equation}\label{eq:transition_amplitude}
	\langle A|e^{-it\hat{H}}|B\rangle=\mathrm{const.}\times Z(K_x,K_y,K_z,it J_\perp,it h),
\end{equation}
where $K_j=A^*_j+B_j+itJ_j,j=x,y,z$. Therefore, when $tJ_\perp\equiv \pi/4~(\mathrm{mod}~ \pi/2),t h\equiv \pi/4~ (\mathrm{mod} ~\pi)$, the transition amplitude is given by the results we derived previously. %

Quantum transition amplitudes, or closely related objects called dynamical partition functions $f(t)\propto-\ln \langle A|e^{-it\hat{H}}|B\rangle$, are the central objects in the study of DQPTs~\cite{heyl2013dynamical,Andraschko2014dynamical,Heyl2014dynamical,Vosk2014dynamical,Heyl2015dynamical,Schmitt2015dynamical,Heyl_2018}.  In this literature, a dynamical phase transition typically referes to a singularity of the dynamical evolution of a physical quantity~[e.g. $f(t)$] at a critical time. In our model, the time is fixed at special values e.g. $t_0=\pi/(4J_\perp)=\pi/(4h)$ to guarantee solvability, and the singularity occurs in $f(t_0)$ %
as we tune the parameters $K_x,K_y,K_z$ across the phase boundary shown in Fig.~\ref{fig:critical}. Although the situation is slightly different, the analogy is clear, and we also expect that if we fix $K_x,K_y,K_z$ to be exactly at the phase boundary, say $K_z=K_x+K_y$, and let the system evolve in time, then there will likely be a singularity in $f(t)$ at $t_0$, i.e. a DQPT in the usual sense. 

Although quantum transition amplitudes are much harder to measure experimentally compared to local observables, there are promising experimental setups~\cite{Jurcevic2017direct,Tian2020observation} that  measure this quantity in relatively small systems, and are capable of  observing signatures of dynamical phase transition. 

\subsubsection{Mapping the partition function to a probe spin coherence}\label{sec:probe_spin_coherence}
We can also realize the partition function as a probe spin coherence, based on the idea of measuring Yang-Lee zeros in the classical Ising model~\cite{wei2012lee,peng2015experimental}. To this end we couple a probe spin-$1/2$ to the whole  3D (quantum) spin system~(bath) shown in Fig.~\ref{fig:hc0}, with probe-bath interaction %
\begin{eqnarray}\label{eq:probe-bath_interaction}
	H_I%
	&=& \hat{\tau}^z\otimes\left(-J_\perp\sum_{\langle ij\rangle\in\boldsymbol{\perp}}\hat{\sigma}^z_{i}\hat{\sigma}^z_{j}+h\sum_i\hat{\sigma}^z_i\right)\nonumber\\
	&=&\frac{1}{2}\hat{\tau}^z \hat{B}
\end{eqnarray}
where $\hat{\tau}^z$ acts on the probe spin, and $J_\perp$ and $h$ are real. The probe spin is initialized in a superposition state $(|\uparrow\rangle+|\downarrow\rangle)/\sqrt{2}$, and the system~(bath) is initially in equilibrium at temperature $T$ with only interactions in the horizontal $x,y,z$ links, described by the canonical ensemble in Eq.~\eqref{eq:Z}  with $J_\perp=h=0$. When we turn on the probe-bath interaction in Eq.~\eqref{eq:probe-bath_interaction}, the thermal
fluctuation of the field $\hat{B}$ induces decoherence of the probe spin~(due to a random phase $Bt$). The probe spin coherence, defined as the ensemble average
of $e^{i\hat{B} t}$, is mapped to~\cite{wei2012lee}
\begin{equation}\label{eq:probe_spin_decoherence}
	L(t)\equiv\langle e^{i\hat{B} t}\rangle=\frac{Z(\beta J_x,\beta J_y, \beta J_z, i J_\perp t,i h t)}{Z(\beta J_x,\beta J_y, \beta J_z,0,0)}.
\end{equation}
Therefore, when $tJ_\perp\equiv \pi/4~(\mathrm{mod}~ \pi/2)$ and $t h\equiv \pi/4~ (\mathrm{mod} ~\pi)$, $L(t)$ is given by our exact solution in Sec.~\ref{sec:solution}~[notice that the denominator of Eq.~\eqref{eq:probe_spin_decoherence} can be calculated easily and has no singularity], and has a topological phase transition when the parameters $J_x,J_y,J_z$ are tuned across the phase boundary in Fig.~\ref{fig:critical}.  
This kind of probe spin coherence has been measured experimentally in an Ising model of 10 spins~\cite{peng2015experimental}. 

\section{Summary and Outlook}\label{sec:summary}
We exactly solved a 3D classical Ising model on a special 3D lattice, which has some of its coupling constants fixed to imaginary values. The solution exploits the special structure of the transfer matrix, which can be mapped to free fermions using a method similar to the solution of Kitaev's honeycomb model. The analytic solution reveals two distinct phases, with a third order phase transition between them. The two phases can be distinguished by measuring the  product of spins on certain loops, the expectation value of which is quantized to certain rational values~($0$, $1$, or $1/3$), depending only on the phase and the topology of the loop. 
We therefore see that the model not only gives insight into interacting many-body systems in 3D,  but that the behavior it shows is particularly interesting: there are phases with topological properties, and a continuous phase transition between them. 

We expect the topological character of the phases to be universal, as discussed in Sec.~\ref{sec:loop_observables}. 
We also expect universality in some other correlations we have not calculated in this paper. For example, the gapless $B$-phase has power-law decaying two-point correlations. For the gapped $B$-phase~(i.e. with the $\kappa$-perturbation introduced in Sec.~\ref{sec:phase_boundary}), if we put the system on a large cylinder~(with axis parallel to the $z$-direction), due to the existence of gapless chiral edge modes on the boundary of the 2D quantum system~(defined by the transfer matrix $\hat{T}$), we expect that the Ising model has power-law decaying correlations on the cylinder boundary even though all two-point correlations in the bulk decay exponentially. We expect the universality in these power-law exponents~(i.e. remain the same when local perturbations are present).

Despite the unphysical complex coupling constants, we described two connections to physical systems. First, the universal long-distance properties of the two phases and the phase transition may be reproduced in a physical 3D system. We demonstrated this by explicitly constructing another 3D classical statistical model with positive Boltzmann weights that has  topological properties identical to the $A$-phase of our 3D Ising model. More speculatively, this suggests that physical systems may have the same universal behavior as models with complex couplings independent of whether the corresponding real-coupling models can be explicitly found or solved. 
We are unsure if the $B$-phase can be realized in a physical classical system, but we expect this to be challenging if at all possible, since Ref.~\cite{Ringel2017Nogothm} suggests the prevalence of sign-problems in a family of closely related phases.    
Second, the partition function of our model can be realized in certain dynamical processes of a 3D quantum spin system, either as a transition amplitude or as a probe spin coherence, allowing the free energy to be experimentally measured in principle, and the phase transitions studied in our model are related to DQPTs in these 3D quantum systems. 

Our model may have other connections to real physical systems beyond the above two. First, when $K_x, K_y, K_z$ are purely imaginary, our transfer matrix $\hat{T}$ in Eq.~\eqref{eq:def_TM} becomes the unitary evolution operator of a periodically driven Kitaev model studied in Ref.~\cite{FloquetKitaev2017}, so our technique of diagonalizing $\hat{T}$ may be useful in studying certain properties of that system. Second, when $K_x, K_y,K_z\to \pm\infty$, $\hat{T}$ becomes a projection operator representing the sequential measurement of $\hat{\sigma}^z_{i}\hat{\sigma}^z_{j}, \hat{\sigma}^y_{i}\hat{\sigma}^y_{j}$, and $\hat{\sigma}^x_{i}\hat{\sigma}^x_{j}$ on all the $z$-, $y$-, and $x$-links, respectively, which is reminiscent of the measurement process of the honeycomb quantum memory code proposed in Ref.~\cite{hastings2021dynamically}.

Our results may also provide hints for constructing a genuinely 3D--i.e. one which does not factorize into decoupled 2D models--classical statistical model with positive Boltzmann weights and a continuous phase transition, a problem that has been studied for more than 60 years but never solved. As one possible direction, we note that our model can be straightforwardly generalized to a large family of solvable 3D classical statistical models, whose transfer matrix is similar to one of the generalized Kitaev models~\cite{Yao2007Exact,Yang2007Mosaic,SI2008Anyonic,Mandal2009Exactly,Yao2009Algebraic,Wu2009Gamma,Ryu2009Three,Tikhonov2010Quantum,Lai2011SU2,Yao2011Fermionic,Barkeshli2015Generalized} that can also be solved by mapping to free fermions. As free-fermion solvable spin models have been systematically classified recently~\cite{Chapman2020characterizationof,Ogura2020geometric,elman2020free}, it is natural to ask if one of them can be promoted to a transfer matrix that corresponds to a physical 3D classical statistical model.

\acknowledgements
Z.W. is especially grateful to Zongping Gong who suggested the idea in Sec.~\ref{sec:probe_spin_coherence}. We also thank Sarang Gopalakrishnan, Bhuvanesh Sundar, and Maxim Olchanyi for helpful discussions. This work was supported in part by the Welch Foundation~(C-1872) and the National Science Foundation~(PHY-1848304).
K.H.'s contribution benefited from discussions at the KITP, which was supported in part by the National Science Foundation under Grant No. NSF PHY-1748958. %

\appendix
\section*{Appendices}
The Appendices contain technical results used throughout our arguments. In App.~\ref{appen:Lieb} we prove a generalization of Lieb's optimal flux theorem which is used in Sec.~\ref{sec:map_fermion} to show that the principal eigenstates of the transfer matrix have no vortices. In App.~\ref{appen:gap_phases_B} we show that the gapless $B$-phase of our model can be gapped by certain perturbations. In App.~\ref{appen:proof_analyticity} we prove some  analytic properties of the fermionic spectrum which are used in determining the phase boundary in Sec.~\ref{sec:phase_boundary}. In App.~\ref{appen:FSE} we show that the splitting of the principal eigenvalue degeneracy of the transfer matrix decays exponentially with system size, which is important for Sec.~\ref{sec:TPloop} and Sec.~\ref{sec:loop_observables}. In App.~\ref{appen:numerical_method} we give a numerical method to calculate the energy of the vortex excitations of the transfer matrix, which helps us confirm that vortices are gapped. In App.~\ref{appen:TPD} we show that the derivation of Sec.~\ref{sec:TPloop} and Sec.~\ref{sec:loop_observables} can be generalized to arbitrary system size $(L_x,L_y)$, leading to the same conclusions. 
\section{Generalization of Lieb's optimal flux theorem to the transfer matrix Eq.~\eqref{eq:Tbondfermion}}\label{appen:Lieb} 
In this section we generalize Lieb's optimal flux theorem~\cite{lieb1994flux} to the free fermion transfer matrix Eq.~\eqref{eq:Tbondfermion} with real parameters $K_x,K_y,K_z$. The goal is to prove that if we fix the magnitude of the coupling constants on each link and allow their signs $u_{ij}$ to vary independently, then the vortex~(flux) configurations that maximize the principal eigenvalue $\Lambda_{\max}$ of $\hat{T}'$  have no vortex~(i.e. have $W_p=+1$ everywhere). For this, it is sufficient to prove that the vortex-free configurations maximize the fermionic partition function  $Z'=\mathrm{Tr}[\hat{T}'^M]$ for any $M$, and then let $M\to\infty$. 

\begin{figure}                                                                                                            
	\center{\includegraphics[width=0.6\linewidth]{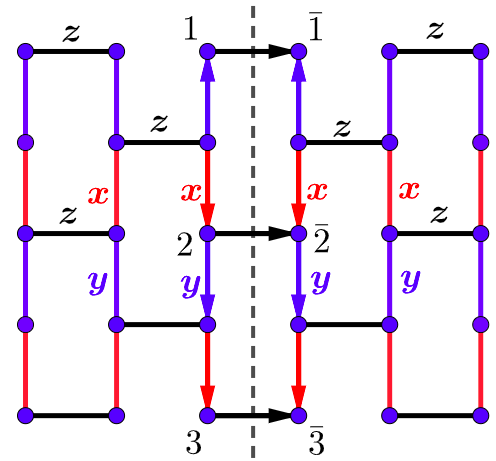}}
	\caption{\label{fig:brickwall} The proof of the generalized Lieb's optimal flux theorem exploits the  reflection symmetry of the brick wall lattice. The mirror line of this reflection is shown as the dashed line cutting through all the $z$-links in a column. The arrows show an example of a reflection symmetric configuration of $\{u_{ij}\}$, where an arrow from $i$ to $j$ means that $u_{ij}=+1=-u_{ji}$. Notice that with the gauge convention $u_{j\bar{j}}=+1$, reflection symmetry of $u_{ij}$ guarantees that all the plaquettes intersecting with the cutting line has zero flux $W_p=+1$, since $u_{ij}$ on the $x$- and $y$-links cancel with their mirror images. The key part of the proof is Eq.~\eqref{eq:Cauchy} which shows that at least one of the optimal flux configurations can be taken to be reflection symmetric with respect to this cutting line. The proof then moves on to apply Eq.~\eqref{eq:Cauchy} to all such cutting lines of the lattice.}
\end{figure}
The proof mostly follows the strategy of Ref.~\cite{lieb1994flux}. The lattice structure is drawn in Fig.~\ref{fig:brickwall}, where both directions are periodic, and we draw a vertical line that cuts the system into two subsystems which are reflections of each other~(up to the difference in the signs of tunneling constants, $u_{ij}$). We will first use reflection positivity~\cite{lieb1994flux} to prove that the optimal flux configuration must have zero flux on the unit cells that intersect with the vertical line, and then apply this conclusion to all such vertical lines~(due to translation invariance in the horizontal direction) to show that the optimal configuration has zero flux everywhere. We use $\hat{R}$ to denote the unitary reflection that maps between the two subsystems, and we denote the links that intersect the cutting line by $1\bar{1}, 2\bar{2},\ldots$, as shown in Fig.~\ref{fig:brickwall}, so that $\hat{R}\hat{c}_j\hat{R}=\hat{c}_{\bar{j}}, j=1,2,3\ldots$. Without loss of generality, we can use the gauge convention in which $u_{j\bar{j}}=1$, since we can always do a gauge transformation on site $j$~(which flips all the $u_{jk}$ linked to $j$) to flip $u_{j\bar{j}}$. We write the free fermion partition function as 
\begin{equation}
	Z=\mathrm{Tr}[(\hat{V}_1 \hat{V}_2\hat{V}_3)^M],
\end{equation}
where $\hat{V}_1=\exp(K_x\sum_{x}u_{ij}i\hat{c}_i \hat{c}_j)$, and similarly for $\hat{V}_2, \hat{V}_3$. Notice that  in both $\hat{V}_1$ and $ \hat{V}_2$, the left and right subsystems are decoupled, so that $\hat{V}_1$ factorizes as $\hat{V}_1=\hat{V}_{1R}\hat{V}_{1L}$, and similarly for $\hat{V}_2$, where we use subscripts $L$~$(R)$ to denote operators acting solely on the subsystem to the left~(right) of the cutting line. $
\hat{V}_3$ involves tunneling between subsystems, and it factorizes as $\hat{V}_3=\hat{V}_{3R}\hat{V}_{3L}\hat{V}_{3I}$, where in our gauge convention mentioned above, 
\begin{eqnarray}
\hat{V}_{3I}&=&\exp(K_z\sum_{j}i\hat{c}_j\hat{c}_{\bar{j}}) \nonumber\\
&=&\cosh(K_z)^m\prod^m_{j=1}(1+t_3 i\hat{c}_j\hat{c}_{\bar{j}}),
\end{eqnarray}
where $t_3=\tanh(K_z)$ and $m$ is the number of cut links~(equal to two times system size in the vertical direction). The partition function becomes
\begin{eqnarray}
	Z&=&\mathrm{Tr}[(\hat{V}_{1R}\hat{V}_{1L}\hat{V}_{2R}\hat{V}_{2L}\hat{V}_{3R}\hat{V}_{3L}\hat{V}_{3I})^M]\\
	 &=&\mathrm{Tr}[(\hat{V}_L \hat{V}_R\hat{V}_{3I})^M]\nonumber\\
	 &=&C\sum_{a_{ij}=0,1}\mathrm{Tr}\left[\hat{V}_L \hat{V}_R (t_3 i\hat{c}_1\hat{c}_{\bar{1}})^{a_{11}}\ldots(t_3 i\hat{c}_m\hat{c}_{\bar{m}})^{a_{1m}}\right.\nonumber\\
	 &&{}\times\left.\hat{V}_L \hat{V}_R (t_3 i\hat{c}_1\hat{c}_{\bar{1}})^{a_{21}}(t_3 i\hat{c}_2\hat{c}_{\bar{2}})^{a_{22}}\ldots(t_3 i\hat{c}_m\hat{c}_{\bar{m}})^{a_{2m}}\right.\nonumber\\
	 &&\cdots\nonumber\\
	 &&{}\times\left.\hat{V}_L \hat{V}_R (t_3 i\hat{c}_1\hat{c}_{\bar{1}})^{a_{M1}}(t_3 i\hat{c}_2\hat{c}_{\bar{2}})^{a_{M2}}\ldots(t_3 i\hat{c}_m\hat{c}_{\bar{m}})^{a_{Mm}}\right],\nonumber
\end{eqnarray}
where $C=(\cosh K_z)^{mM}$, $\hat{V}_L=\hat{V}_{1L}\hat{V}_{2L}\hat{V}_{3L}$, and similarly for $\hat{V}_R$. Our strategy now is to move all the left operators to the left, and all the right operators to the right, without changing the relative order within each class. Notice that $\hat{V}_L$  commute with any operator acting on the right, and $\hat{V}_R$ commute with any operator on the left, while the exchange between $\hat{c}_{i}$ and $\hat{c}_{\bar{j}}$ always introduces a minus sign. We therefore have 
\begin{equation}\label{eq:PF_Z}
	Z[\hat{V}_L,\hat{V}_R]=C\sum_{\mathbf{a}\in \{0,1\}^{Mm}}\mathrm{Tr}[\hat{X}_L^{\mathbf{a}}\hat{X}_R^{\mathbf{a}}](it_3)^{|\mathbf{a}|}(-1)^{c(\mathbf{a})},
\end{equation}
where $\hat{X}_L$ is a product of $M$ number of $\hat{V}_L$ and $|\mathbf{a}|=\sum_{i,j}a_{ij}$ number of $\hat{c}_j$~(suitably ordered) and similarly for $\hat{X}_R$, and $c(\mathbf{a})=|\mathbf{a}|(|\mathbf{a}|-1)/2$ denotes the total number of fermion minus signs introduced by exchanging  $\hat{c}_{i}$ and $\hat{c}_{\bar{j}}$. We write $Z=Z[\hat{V}_L,\hat{V}_R]$ to emphasize the explicit dependence of $Z$ on the flux configuration which determines the signs of the tunneling constants in $\hat{V}_L,\hat{V}_R$.

The next step is to factorize the trace of whole system as a product of traces of subsystems. One way to do this is to combine the two Majorana operators on each $x$-link into a Dirac fermion~(the trace is independent of the choice of the Dirac fermion basis, since different basis are related by a unitary transformation), so that $\mathrm{Tr}[\hat{A}_L\hat{B}_R]=\mathrm{Tr}_L[\hat{A}_L]\mathrm{Tr}_R[\hat{B}_R]$, where $\mathrm{Tr}_L$~($\mathrm{Tr}_R$) denotes the trace on the left~(right) subsystem, and $\mathrm{Tr}_L[\hat{A}_L]=0$ if $\hat{A}_L$ is an odd product of fermionic operators, and similarly for $\hat{B}_R$. Therefore in  Eq.~\eqref{eq:PF_Z} we can restrict the summation to those $\mathbf{a}$ for which $|\mathbf{a}|$ is even, in which case we have $i^{|\mathbf{a}|}(-1)^{c(\mathbf{a})}=1$. Furthermore, we can show that $\mathrm{Tr}_L[\hat{X}_L]$ is real~(and similarly for $\mathrm{Tr}_R[\hat{X}_R]$), i.e. $\mathrm{Tr}_L[\hat{X}_L]=\mathrm{Tr}_L[\hat{X}_L]^*$, since complex conjugation sends $i$ to $-i$ and reverse the signs of all the Majorana fermions on even sites~(leaving Majorana operators on odd sites unchanged), thereby leaving $\hat{X}_L$ invariant. We now have 
\begin{eqnarray}\label{eq:Cauchy}
	Z[\hat{V}_L,\hat{V}_R]^2&=&\left(C\sum_{\mathbf{a}}t_3^{|\mathbf{a}|}\mathrm{Tr}_L[\hat{X}_L^{\mathbf{a}}]\mathrm{Tr}_R[\hat{X}_R^{\mathbf{a}}]\right)^2\nonumber\\
	&\leq&\left(C\sum_{\mathbf{a}}t_3^{|\mathbf{a}|}\mathrm{Tr}_L[\hat{X}_L^{\mathbf{a}}]^2\right)\left(C\sum_{\mathbf{a}}t_3^{|\mathbf{a}|}\mathrm{Tr}_R[\hat{X}_R^{\mathbf{a}}]^2\right)\nonumber\\
	&=&\left(C\sum_{\mathbf{a}}t_3^{|\mathbf{a}|}\mathrm{Tr}_L[\hat{X}_L^{\mathbf{a}}]\mathrm{Tr}_R[\hat{R}\hat{X}_L^{\mathbf{a}}\hat{R}]\right)\nonumber\\
	&&\times\left(C\sum_{\mathbf{a}}t_3^{|\mathbf{a}|}\mathrm{Tr}_L[\hat{R}\hat{X}_R^{\mathbf{a}}\hat{R}]\mathrm{Tr}_R[\hat{X}_R^{\mathbf{a}}]\right)\nonumber\\
	&=&Z[\hat{V}_L,\hat{R}\hat{V}_L\hat{R}]Z[\hat{R}\hat{V}_R\hat{R},\hat{V}_R],
\end{eqnarray}
where in the second line we use the Cauchy-Schwartz inequality. Eq.~\eqref{eq:Cauchy} means that for any flux configuration determined by $[\hat{V}_L,\hat{V}_R]$, at least one of the reflection symmetric configurations corresponding to $[\hat{V}_L,\hat{R}\hat{V}_L\hat{R}]$ or $[\hat{R}\hat{V}_R\hat{R},\hat{V}_R]$ have smaller or equal free energy~(larger or equal $Z$). Notice that each of these reflection symmetric configurations has zero flux on cells intersecting with the cutting line. Therefore the optimal flux configuration~(in case of degeneracy, consider the optimal flux configuration with least $\pi$-fluxes) must have zero flux everywhere, since otherwise we can use Eq.~\eqref{eq:Cauchy} to construct another flux configuration that has either strictly smaller free energy or strictly less $\pi$-fluxes.

The generalized Lieb's theorem shows that at least one of the principal eigenstates have all $W_p$ equal to one. This is also confirmed by the numerical results presented in App.~\ref{appen:numerical_method}, %
which additionally suggests that the excitation energies of vortices remain gapped in the thermodynamic limit. 

\section{The fermion gap of the  $B$-phase}\label{appen:gap_phases_B}
In Sec.~\ref{sec:phase_boundary} we claimed that a subregion of the $B$-phase can be gapped by 
adding small imaginary parts to $J_x, J_y$, so that $K_x\to K_x+i\kappa,K_y\to K_y-i\kappa$, and then adding a small real part to the coupling constants of the $x,y$ links that break the lattice reflection symmetry, in the pattern shown in Fig.~\ref{fig:brickwall-unitcell}. In the following we verify this claim at the line $K_x=K_y$, and show that the fermion gap $\Delta\propto\kappa^2$ in the limit of small $\kappa$. 

Since the operators in the exponential of the fermionic transfer matrix remain quadratic in the Majorana fermion operators with this perturbation, the method used in Sec.~\ref{sec:solve_fermion_TM} still works. We can simply repeat the derivations in Eqs.~(\ref{eq:Ttilde}-\ref{eq:chepsilon}), the only modification now is that
$ e^{\pm\epsilon_{\vec{q},1}}, e^{\pm\epsilon_{\vec{q},2}}$ are the eigenvalues of the modified matrix
\begin{equation}\label{eq:appen_def_Tq}
	T_{\vec{q}}=e^{2K'_x P+2\kappa P_0}e^{2K'_yQ+2\kappa Q_0}e^{2K_zR},
\end{equation}
where $K_x'=K_x+i\kappa$,  $K_y'=K_y-i\kappa$, the matrices $P,Q,R$ are the same as defined in Eq.~\eqref{def:PQR}, and
\begin{eqnarray}\label{def:PQR_ikappa}
	P_0&=&\begin{pmatrix}
		0 & i & 0 & 0 \\
		-i& 0 & 0 & 0 \\
		0 & 0 & 0 & -i \\
		0 & 0 & i& 0
	\end{pmatrix},\nonumber\\
	Q_0&=&\begin{pmatrix}
		0 & ie^{-iq_y} & 0 & 0 \\
		-ie^{iq_y}& 0 & 0 & 0 \\
		0 & 0 & 0 & -ie^{iq_y} \\
		0 & 0 & ie^{-iq_y}& 0
	\end{pmatrix}.
\end{eqnarray}
The eigenvalue problem of $T_{\vec{q}}$ can still be simplified to a quadratic equation $z^2+Az+B=0$, where $z=(x+1/x)/2=\cosh\epsilon_{\vec{q},j}$~(for $j=1,2$).  The expressions of $A,B$ are way more complicated than in Eq.~\eqref{eq:chepsilon}, so we don't show them here. 

To determine the spectral gap of $\epsilon_{\vec{q},1}$~(the one of $\epsilon_{\vec{q},1},\epsilon_{\vec{q},2}$ with smaller real part), we first let $\kappa=0$ and find the $\vec{q}^*$ at which $\epsilon_{\vec{q},1}$ vanishes. This takes the form $\vec{q}^*=(0,q_y^*)$ since $\epsilon_{\vec{q},1}$ is smallest at $q_x=0$ for a fixed $q_y$. %
Requiring that $T_{\vec{q}}$ has an eigenvalue 1 at $\vec{q}^*$, which is equivalent to $1+A+B=0$, we find that 
\begin{equation}
	\cos q_y^*=\frac{\cosh{2K_z}-\cosh 2K_x\cosh 2K_y}{\sinh 2K_x\sinh 2K_y}.
\end{equation}
We can now study the spectrum near the point $\vec{q}^*$, by expanding  the equation $Q(\kappa,q_x,q_y,z)=z^2+Az+B=0$ with  $z=1+\epsilon_{\vec{q},1}^2/2$, $\vec{q}=\vec{q}^*+(q_x,\delta q_y)$. We find that 
\begin{eqnarray}\label{eq:epsilon_expansion_kappa}
	4s_3^2\epsilon^2_{\vec{q},1}&=&\frac{1}{2}Q_{\kappa\kappa}\kappa^2+\frac{1}{2}Q_{xx} q_x^2+\frac{1}{2}Q_{yy}\delta q_y^2+Q_{x\kappa}q_x\kappa\nonumber\\
	&& {}+O(\kappa^4)+O(\kappa^2\delta q_y)+O(q_x\delta q_y\kappa),
\end{eqnarray}
where
\begin{eqnarray}\label{eq:epsilon_expansion_kappa_coefs}
	Q_{\kappa\kappa}&=&32(1-\cos q_y^*)[C_3-\cos q_y^*-C_{1-2}(1-\cos q_y^*)]\nonumber\\
	Q_{xx}&=&s_3^2(4c_1c_2c_3-2-C_1-C_2)\nonumber\\
	Q_{x\kappa}&=&4s_3^2(S_1+S_2)\sin q_y^*\nonumber\\
	Q_{yy}&=&8s_1^2s_2^2\sin^2 q_y^*,
\end{eqnarray}
where $C_{1-2}=\cosh(4K_x-4K_y)$, and $c_j=\cosh 2K_j,s_j=\sinh 2K_j,C_j=\cosh4K_j,S_j=\sinh4K_j$, for $j=x,y,z$~[same as defined in the main text below Eq.~\eqref{eq:chepsilon}]. 

We now determine the  $(q_x,\delta q_y)$ that minimizes the RHS of Eq.~\eqref{eq:epsilon_expansion_kappa}. 
At the line $K_x=K_y$, one can check that $Q_{x\kappa}^2=Q_{\kappa\kappa}Q_{xx}$, and $Q_{xx}>0,Q_{\kappa\kappa}>0$, so the minimum is at $q_x=\kappa Q_{x\kappa}/Q_{xx}+O(\kappa^2),\delta q_y=O(\kappa^2)$. Near this point, in the RHS of Eq.~\eqref{eq:epsilon_expansion_kappa}, terms of order $\kappa^2$ exactly cancel, leaving $\mu\kappa^4$ for some constant $\mu>0$~(the analytic expression for $\mu$ is quite complicated, so we do not show it here). 
Therefore we have $\Delta=\min_{\vec{q}}\epsilon_{\vec{q},1}\propto\kappa^2$. This result is also verified numerically. 

(Notice that when $\kappa=0$, $\epsilon_{\vec{q},1},\epsilon_{\vec{q},2}$  are real; furthermore, since the coefficients in Eq.~\eqref{eq:epsilon_expansion_kappa_coefs} are all real, $\epsilon_{\vec{q},1}$ must be real at order $\kappa^2$, so the distinction between  $\epsilon_{\vec{q},1}$ and $\mathrm{Re}[\epsilon_{\vec{q},1}]$ is unimportant here--the gap for  $\mathrm{Re}[\epsilon_{\vec{q},1}]$ is also proportional to $\kappa^2$.)
\section{The analyticity of $\epsilon_{\vec{q},1}+\epsilon_{\vec{q},2}$}\label{appen:proof_analyticity}
In this section we study the complex analyticity of  $\epsilon_{\vec{q},1}+\epsilon_{\vec{q},2}$ as a function of all its parameters $K_x,K_y,K_z, \kappa, q_x, q_y$. Here $\epsilon_{\vec{q},1},\epsilon_{\vec{q},2}$ are the two eigenvalues of $T_{\vec{q}}=T(K_x,K_y,K_z, \kappa, q_x, q_y)$ defined in Eq.~\eqref{eq:appen_def_Tq}, with $0\leq \mathrm{Re}[\epsilon_{\vec{q},1}]\leq \mathrm{Re}[\epsilon_{\vec{q},2}]$~(if the real parts are equal, order by their imaginary parts). Notice that even though $K_x,K_y, \kappa$ are assumed real in the definition $K_x'=K_x+i\kappa$,  $K_y'=K_y-i\kappa$, we still consider the analytic continuation of $T(K_x,K_y,K_z, \kappa, q_x, q_y)$ to the complex regions. 
This analyticity is used in Sec.~\ref{sec:phase_boundary} in determining the phase boundary, and will also be used in App.~\ref{appen:FSE} in proving the finite size splitting of degenerate $\ln \Lambda_{\text{max}}$  in gapped phases. %

We prove the following theorem:
\begin{theorem}\label{thm1}
	For a given set of $(K_{x0},K_{y0},K_{z0},\kappa_0)\in\mathbb{R}^4$, if $\mathrm{Re}[\epsilon_{\vec{q},1}]>0$ and $\mathrm{Re}[e^{\epsilon_{\vec{q},1}+\epsilon_{\vec{q},2}}]\geq 0$ for all $\vec{q}\in [-\pi,\pi]^2$, then there exists $\rho>0$ such that $\epsilon_{\vec{q},1}+\epsilon_{\vec{q},2}$ is a single-valued complex analytic function~(in all its parameters) in the region %
	\begin{eqnarray}
	R_\rho&=&\left\{(K_x,K_y,K_z, \kappa, q_x, q_y)\in \mathbb{C}^6~|~ \mathrm{Re}[q_i]\in [-\pi,\pi], \right.\nonumber\\
	&&{}|\mathrm{Im}[q_i]|\leq \rho,i=x,y, |\kappa-\kappa_0|\leq\rho,\nonumber\\
	&&|K_j-K_{j0}|\leq\rho,j=x,y,z	\left. \right\}.
	\end{eqnarray}
	
\end{theorem}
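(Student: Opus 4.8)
The plan is to reduce the statement to the elementary fact that a nowhere-vanishing holomorphic function on a simply connected domain admits a single-valued holomorphic logarithm, applied to $g\equiv e^{\epsilon_{\vec q,1}+\epsilon_{\vec q,2}}$. First I would note that the $4\times4$ matrix $T_{\vec q}$ of Eq.~\eqref{eq:appen_def_Tq} has entries that are entire functions of all six variables $(K_x,K_y,K_z,\kappa,q_x,q_y)\in\mathbb C^6$, since the fixed matrices $P,Q,R,P_0,Q_0$ have entries polynomial in $e^{\pm iq_x},e^{\pm iq_y}$ and matrix exponentials and products of entire matrix-valued maps are entire. By the construction of Sec.~\ref{sec:solve_fermion_TM}, the four eigenvalues of $T_{\vec q}$ are $e^{\pm\epsilon_{\vec q,1}},e^{\pm\epsilon_{\vec q,2}}$, with $\cosh\epsilon_{\vec q,1},\cosh\epsilon_{\vec q,2}$ the roots of the quadratic $z^2+Az+B=0$ (with $A,B$ entire). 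The hypothesis $\mathrm{Re}[\epsilon_{\vec q,1}]>0$, together with the ordering $\mathrm{Re}[\epsilon_{\vec q,1}]\le\mathrm{Re}[\epsilon_{\vec q,2}]$, implies that at the base point (real parameters, $\vec q\in[-\pi,\pi]^2$) exactly two eigenvalues, $e^{\epsilon_{\vec q,1}}$ and $e^{\epsilon_{\vec q,2}}$, lie strictly outside the unit circle and the other two strictly inside, none on the circle. Since $[-\pi,\pi]^2$ is compact and the eigenvalues depend continuously on all parameters, there is a $\delta>0$ with $\mathrm{Re}[\epsilon_{\vec q,1}]\ge 2\delta$ for all real $\vec q$, and hence a $\rho>0$ such that at every point of $R_\rho$ the spectrum of $T_{\vec q}$ still splits two-plus-two across the unit circle, with a margin bounded below uniformly in $\vec q$.

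Next I would realize $g$ as a holomorphic function on $R_\rho$. For each point of $R_\rho$ let $\hat\Pi$ be the Riesz spectral projector of $T_{\vec q}$ onto the rank-two invariant subspace of its two outer eigenvalues; because the two groups of eigenvalues stay separated by the circle $|x|=1$, this $\hat\Pi$ is a contour integral of $(x-T_{\vec q})^{-1}$ that never meets the spectrum, hence jointly holomorphic in all six parameters. Then $g\equiv\det\!\big(\hat\Pi\,T_{\vec q}\,\hat\Pi+1-\hat\Pi\big)$ is a polynomial in the entries of $T_{\vec q}$ and $\hat\Pi$, hence holomorphic, and in a basis block-diagonalizing $\hat\Pi$ one sees $g=e^{\epsilon_{\vec q,1}}e^{\epsilon_{\vec q,2}}=e^{\epsilon_{\vec q,1}+\epsilon_{\vec q,2}}$, so $g$ is nowhere zero and in fact $|g|\ge e^{4\delta}>1$ on $R_\rho$. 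As a hands-on cross-check one can instead work directly from $z^2+Az+B=0$: $\cosh(\epsilon_{\vec q,1}+\epsilon_{\vec q,2})=B+\sqrt{(B+1)^2-A^2}$ with the square-root branch fixed at the base point, the radicand $(B+1-A)(B+1+A)$ being nonzero throughout $R_\rho$ (for $\rho$ small) precisely because $+1$ and $-1$ are not eigenvalues of $T_{\vec q}$ --- again a consequence of $\mathrm{Re}[\epsilon_{\vec q,1}]>0$ bounding all four eigenvalues away from the unit circle --- so that $g=\cosh(\epsilon_{\vec q,1}+\epsilon_{\vec q,2})+\sqrt{\cosh^2(\epsilon_{\vec q,1}+\epsilon_{\vec q,2})-1}$ is holomorphic.

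Finally, since $R_\rho$ is a product of rectangles in the $q_i$ and disks in $K_x,K_y,K_z,\kappa$, it is convex and in particular simply connected, so the holomorphic, nowhere-vanishing $g:R_\rho\to\mathbb C\setminus\{0\}$ has a single-valued holomorphic logarithm, and $\epsilon_{\vec q,1}+\epsilon_{\vec q,2}$ is the branch obtained by continuation from the real slice. The second hypothesis $\mathrm{Re}[e^{\epsilon_{\vec q,1}+\epsilon_{\vec q,2}}]\ge 0$ enters here: combined with $|g|>1$ it keeps $g$, on the real slice, a definite distance from the cut $(-\infty,0]$ of the principal logarithm, so that after shrinking $\rho$ once more $g(R_\rho)$ avoids $(-\infty,0]$ entirely and $\epsilon_{\vec q,1}+\epsilon_{\vec q,2}=\mathrm{Log}\,g$ is the principal branch, consistent with the normalization in the statement. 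I expect the main obstacle to be bookkeeping rather than conceptual: turning the two ``for $\rho$ small enough'' claims --- the uniform-in-$\vec q$ spectral splitting and the uniform distance of $g$ from the branch cut --- into quantitative estimates, using continuity of the roots of the characteristic polynomial and compactness of $[-\pi,\pi]^2$. That uniformity is routine but is where the content lies; everything else is the soft machinery just described.
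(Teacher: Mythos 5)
Your proof is correct, and it shares the overall skeleton of the paper's argument in App.~\ref{appen:proof_analyticity}: both use compactness of $[-\pi,\pi]^2$ together with continuity of the roots of the characteristic polynomial to propagate the hypothesis $\mathrm{Re}[\epsilon_{\vec q,1}]>0$ into a uniform two-plus-two splitting of the spectrum across the unit circle on a complex neighborhood $R_\rho$, and both then obtain the single-valued logarithm from the combination $|e^{\epsilon_{\vec q,1}+\epsilon_{\vec q,2}}|>1$ and $\mathrm{Re}[e^{\epsilon_{\vec q,1}+\epsilon_{\vec q,2}}]\geq 0$ (your ``$g(R_\rho)$ avoids the cut $(-\infty,0]$'' is the paper's ``the value set cannot wind around the origin''). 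Where you genuinely diverge is the middle step, the holomorphy of $g=e^{\epsilon_{\vec q,1}+\epsilon_{\vec q,2}}$. The paper writes $x_1x_2=(z_1+\sqrt{z_1^2-1})(z_2+\sqrt{z_2^2-1})$ as in Eq.~\eqref{eq:analytic_x1x2}, observes that each piece is an algebraic function of the Vieta coefficients $A,B$, and excludes branch points by arguing that the strict ordering $|x_2|\geq|x_1|>|1/x_1|\geq|1/x_2|$ forces $x_1x_2$ to be a continuous, single-valued function. Your Riesz-projector construction $g=\det\bigl(\hat\Pi\,T_{\vec q}\,\hat\Pi+1-\hat\Pi\bigr)$ replaces that algebraic bookkeeping with a manifestly holomorphic expression, since the projector onto the outer spectral subspace is a contour integral of the resolvent over the unit circle that never meets the spectrum. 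This buys robustness: it does not rely on the reduction to a quadratic (so it would survive a larger unit cell or a perturbation that breaks the pairing structure), and it sidesteps the ``continuity of an algebraic function excludes branch points'' step, which is valid but is the least explicit part of the paper's proof. The paper's route, in exchange, produces the closed-form expression that is reused elsewhere. Your cross-check via $\cosh(\epsilon_{\vec q,1}+\epsilon_{\vec q,2})=B+\sqrt{(B+1)^2-A^2}$, with the radicand nonvanishing because $\pm1$ is never an eigenvalue of $T_{\vec q}$ under the gap hypothesis, is also sound and is closer in spirit to the paper's computation. The remaining work you flag---making the two ``$\rho$ small enough'' claims quantitative via uniform continuity on a compact thickened region---is exactly how the paper handles them, so nothing essential is missing.
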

\begin{proof}
We use the notation and the results of Sec.~\ref{sec:solve_fermion_TM} and App.~\ref{appen:gap_phases_B}. 
We begin by noticing that the characteristic polynomial $P_{T_{\vec{q}}}(x)$ of $T_{\vec{q}}$ has coefficients  complex analytic in $K_x,K_y,K_z,\kappa,q_x,q_y$ everywhere~(except at infinity), since taking exponentials or determinants of matrices cannot introduce singularities. It follows that the coefficients of $z^2+Az+B=0$ are complex analytic everywhere. Denote the roots by $z_{j}=(x_{j}+1/x_{j})/2=\cosh\epsilon_{\vec{q},j}$ for $j=1,2$. 
Vieta's relations guarantee that all symmetric polynomials of $(z_1,z_2)$, such as $z_1+z_2,z_1z_2,z_1^2+z_2^2$, are polynomials of $A,B$ and therefore analytic everywhere in $\mathbb{C}^6$. 

We now prove that there exists $\rho>0$ such that $\mathrm{Re}[\epsilon_{\vec{q},1}]>0$ in $R_\rho$. %
First, $\mathrm{Re}[\epsilon_{\vec{q},1}]=\ln |x_1|$ is continuous in $K_x,K_y,K_z,\kappa,q_x,q_y$ everywhere, which follows from the continuity of the roots $\{x_j\}_{1\leq j\leq 4}$ of the polynomial $P_{T_{\vec{q}}}(x)$ as a function of its coefficients, and the fact that the roots are ordered by their norm.  We now invoke the theorem that if a  function is continuous on a closed and bounded region, then it is bounded~(and attains its bounds) and uniformly continuous in this region. Since $R_0$ is closed and bounded, let $\Delta>0$ be the minimum of $\mathrm{Re}[\epsilon_{\vec{q},1}]$ in $R_0$. Since $\mathrm{Re}[\epsilon_{\vec{q},1}]$  is uniformly continuous in the closed and bounded region $R_{1.0}$, there exists $\rho\in (0,1.0]$ such that 
\begin{eqnarray}
	&&|\mathrm{Re}[\epsilon_{\vec{q},1}]|_{K_x,K_y,K_z,\kappa,\vec{q}}-\mathrm{Re}[\epsilon_{\vec{q},1}]|_{K_{x0},K_{y0},K_{z0},\kappa_0,\mathrm{Re}[\vec{q}]}|\nonumber\\
	&&\leq \Delta/2,~~~\forall (K_x,K_y,K_z,\kappa,q_x,q_y)\in R_\rho,
\end{eqnarray}
which implies that $\mathrm{Re}[\epsilon_{\vec{q},1}]|_{K_x,K_y,K_z,\kappa,\vec{q}}\geq \Delta/2>0$ in $R_\rho$.

We now study the analyticity of $x_1x_2=e^{\epsilon_{\vec{q},1}+\epsilon_{\vec{q},2}}$. We have
\begin{eqnarray}\label{eq:analytic_x1x2}
	x_1 x_2&=& (z_1+\sqrt{z_1^2-1})(z_2+\sqrt{z_2^2-1})\nonumber\\
	&=&z_1z_2+\sqrt{z_2^2-1}\sqrt{z_1^2-1}\nonumber\\
	&&+(z_1\sqrt{z_2^2-1}+z_2\sqrt{z_1^2-1}).
\end{eqnarray}
Notice that by Vieta's relations, each term in the RHS can be expressed as an algebraic function of $A$ and $B$, and therefore $x_1x_2$ can at most contain branch cuts or branch points in its parameters $(K_x,K_y,K_z,\kappa,q_x,q_y)$. However, in $R_\rho$, we have proved that the four roots $x_2,x_1,1/x_1,1/x_2$ of $P_{T_{\vec{q}}}(x)$ satisfy $|x_2|\geq |x_1|>|1/x_1|\geq |1/x_2|$. Again by the continuity of roots of a polynomial as a function of its parameters, $x_1x_2$ must be a continuous, single-valued function of $A,B$ in the region $R_\rho$~\footnote{Notice the importance of the condition $|x_1|>|1/x_1|$: if  there is a point in $R_\rho$ where $|x_1|=|1/x_1|$, then at this point $x_1x_2$ would jump to $x_2/x_1$ and therefore be discontinuous. }. This rules out any branch cuts or branch points, and therefore $x_1x_2=e^{\epsilon_{\vec{q},1}+\epsilon_{\vec{q},2}}$ must be analytic in $R_\rho$.

We now discuss the analyticity of $\epsilon_{\vec{q},1}+\epsilon_{\vec{q},2}=\ln x_1x_2$  in $R_\rho$.
We already know that $|x_1x_2|> 1$ in $R_\rho$, and the branch point of $\ln x_1x_2$ is at the origin, so we only need to guarantee that, when the parameters $K_x,K_y,K_z,\kappa,q_x,q_y$ vary in $R_\rho$, the values of $x_1x_2$ on the complex plane do not wind around the origin. We already know that $\mathrm{Re}[x_1x_2]\geq 0$ in $R_0$~(by assumption of the theorem), and $x_1x_2$ is continuous in the closed and bounded region $R_\rho$. Therefore  $x_1x_2$ is bounded and uniformly continuous in $R_\rho$. Using a similar method as above, there exists $\rho'\in (0,\rho]$ such that $\mathrm{Re}[x_1x_2]>-1/2$ in $R_{\rho'}$. Combined with $|x_1x_2|> 1$ in $R_{\rho'}\subset R_\rho$, we know that the value set of $x_1x_2$ cannot wind around the origin for $(K_x,K_y,K_z,\kappa,q_x,q_y)\in R_\rho'$.  Therefore $\epsilon_{\vec{q},1}+\epsilon_{\vec{q},2}=\ln x_1x_2$ is a single-valued complex analytic function in $R_{\rho'}$. This concludes the proof.
\end{proof}
We finally remark on the role of Thm.~\ref{thm1} in determining the phase boundary of our model. Since the free energy $f$ is related to $\epsilon_{\vec{q},1}+\epsilon_{\vec{q},2}$ in Eq.~\eqref{eq:free_energy} by an integration in $\vec{q}$ over $[-\pi,\pi]^2$, Thm.~\ref{thm1} is strong enough to guarantee that $f$ is complex  analytic in an open neighborhood of $(K_{x0},K_{y0},K_{z0},\kappa_0)$, if at this point $\mathrm{Re}[\epsilon_{\vec{q},1}]>0$   and  $\mathrm{Re}[e^{\epsilon_{\vec{q},1}+\epsilon_{\vec{q},2}}]\geq 0$ for all $\vec{q}\in [-\pi,\pi]^2$. But we have numerically checked that $\mathrm{Re}[e^{\epsilon_{\vec{q},1}+\epsilon_{\vec{q},2}}]\geq 0$ is almost always satisfied, at least for a wide range of parameters $(K_{x0},K_{y0},K_{z0},\kappa_0)\in \mathbb{R}^4$. So a phase transition can only happen when $\mathrm{Re}[\epsilon_{\vec{q},1}]$ becomes gapless.

\section{Finite size splitting of degenerate $\ln \Lambda_{\text{max}}$ in gapped phases is exponentially small in system size}\label{appen:FSE}
In this section we prove that in the regions where  $\mathrm{Re}[\epsilon_{\vec{q},i}]$ are gapped, the finite size differences among the four different boundary conditions of
\begin{eqnarray}
\epsilon(L)\equiv \frac{1}{L^2}\sum_{\vec{q}}(\epsilon_{\vec{q},1}+\epsilon_{\vec{q},2})
\end{eqnarray} 
decays exponentially in system size $L$. %
In the following we will prove that $|\epsilon(L)-\epsilon_\infty|\leq C e^{-\rho L}$ for some  positive constants $C,\rho$. For simplicity we focus on the double periodic boundary condition~(++), and other cases can be treated in a similar way.

Denote $\epsilon_{\vec{q}}\equiv (\epsilon_{\vec{q},1}+\epsilon_{\vec{q},2})$, and define
\begin{eqnarray}\label{def:fx}
f_L(\vec{x})&=&\frac{1}{L^2}\sum_{\vec{q}} \epsilon_{\vec{q}} e^{i\vec{q}\cdot\vec{x}},\nonumber\\ 
f(\vec{x})\equiv f_\infty(\vec{x})&=&\frac{1}{4\pi^2}\int^\pi_{-\pi}\epsilon_{\vec{q}} e^{i\vec{q}\cdot\vec{x}} d^2q,
\end{eqnarray}
Notice that $f_L(0)=\epsilon(L)$. We have
\begin{eqnarray}
\sum_{m,n\in\mathbb{Z}}f[\vec{x}+(m,n)L]&=&\int\frac{d^2q}{4\pi^2}\sum_{m,n\in\mathbb{Z}}e^{iq_x mL+i q_y nL+i\vec{q}\cdot\vec{x}}\epsilon_{\vec{q}}\nonumber\\
&=&\sum_{r,s\in\mathbb{Z}}\int d^2q\delta[\vec{q}L-2\pi (r,s)]e^{i\vec{q}\cdot\vec{x}}\epsilon_{\vec{q}}\nonumber\\
&=&\frac{1}{L^2}\sum_{\vec{q}}e^{i\vec{q}\cdot\vec{x}}\epsilon_{\vec{q}}\nonumber\\
&=&f_L(\vec{x}).
\end{eqnarray}
Therefore 
\begin{equation}\label{eq:delta_eps_L}
\epsilon_L-\epsilon_\infty=f_L(\vec{0})-f(\vec{0})=\sum_{(m,n)\in\mathbb{Z}^2{\backslash\{(0,0)\}}}f[(m,n)L].
\end{equation}

App.~\ref{appen:proof_analyticity} proved that when $\mathrm{Re}[\epsilon_{\vec{q},i}]$ are gapped and positive, $\epsilon_{\vec{q}}$ is complex analytic in the region $|\mathrm{Im}(q_x)|\leq \rho,|\mathrm{Im}(q_y)|\leq \rho$ for some $\rho>0$. This leads to the exponential decay of
$f(\vec{x})$  in $x$, since 
\begin{eqnarray}
|f(\vec{x})|&=&\frac{1}{4\pi^2}\left|\int^\pi_{-\pi}\epsilon_{\vec{q}} e^{i\vec{q}\cdot\vec{x}} d^2q\right|,	\nonumber\\
&=&\frac{1}{4\pi^2}\left|\int^\pi_{-\pi}\epsilon_{\vec{q}+i\rho(\text{sgn}(x),\text{sgn}(y))} e^{i\vec{q}\cdot\vec{x}} e^{-\rho(|x|+|y|)} d^2q\right|,	\nonumber\\
&\leq& e^{-\rho (|x|+|y|)} \max_{\vec{q}}|\epsilon_{\vec{q}+i\rho(\text{sgn}(x),\text{sgn}(y))}|.
\end{eqnarray}
Then, Eq.~\eqref{eq:delta_eps_L} implies that $|\epsilon(L)-\epsilon_\infty|\leq C e^{-\rho L}$ for some constant $C$.

\section{Numerical solution for vortex sectors without translation invariance}\label{appen:numerical_method}
In Sec.~\ref{sec:phase_boundary} we claimed that vortices are ``gapped'' in the thermodynamic limit for all nonzero $K_1, K_2, K_3$. More precisely, this means that the principal eigenvalue $\ln \Lambda_{\mathrm{\max}}$ of the fermionic transfer matrix $\hat{T}'$ for any sector with vortices is smaller than that of the vortex-free sector by a finite amount $\Delta>0$. This finite excitation  gap is essential for the analysis of topological degeneracy and loop observables in Sec.~\ref{sec:TPloop}. Although the generalized Lieb's theorem in App.~\ref{appen:Lieb} proves that vortices have non-negative excitation energy, we still need to verify that this excitation energy does not approach zero in the thermodynamic limit. To verify the finite excitation gap claim, we need to numerically solve the eigenvalues of $\hat{T}'$, since vortices break translation symmetry and the Fourier transform in the main text cannot be used anymore. 
In the following we first describe the method in App.~\ref{appen:num_method} and then present the result in App.~\ref{appen:numerical_vgap}.
\subsection{Method}\label{appen:num_method}
In the following we present a numerical method to calculate the largest eigenvalue of the free fermion transfer matrix of the form
\begin{eqnarray}
\hat{T}&=&\exp\left(\sum_{i,j}P_{ij} i \hat{c}_i\hat{c}_j\right)\exp\left(\sum_{i,j} Q_{ij} i \hat{c}_i\hat{c}_j\right)\nonumber\\
&&{}\times\exp\left( \sum_{i,j}R_{ij} i \hat{c}_i\hat{c}_j\right),
\end{eqnarray}
where $P,Q,R$ are general $2N\times 2N$ antisymmetric matrices~(not necessarily translationally invariant), and in this section repeated indices indicate summation. Denote by $\mathfrak{so}(2N)$ the Lie algebra of all $2N\times 2N$ antisymmetric matrices. %
For any $X\in \mathfrak{so}(2N)$, define
\begin{equation}
\rho(X)\equiv \frac{1}{4}\sum_{i,j} X_{ij} \hat{c}_i\hat{c}_j.
\end{equation}
It is straightforward to verify that $\rho$ is a representation of $\mathfrak{so}(2N)$, i.e.
\begin{equation}
[\rho(X),\rho(Y)]=\rho([X,Y]).
\end{equation}
We can therefore extend $\rho$ to the corresponding elements of the $SO(2N)$ Lie group by $\rho(e^X)\equiv e^{\rho(X)}$. Notice that $\hat{T}$ is an element of this Lie group in the Majorana fermion representation
\begin{eqnarray}
\hat{T}&=&e^{4i\rho(P)}e^{4i\rho(Q)}e^{4i\rho(R)}\nonumber\\
&=&\rho(e^{4iP})\rho(e^{4iQ})\rho(e^{4iQ})\nonumber\\
&=&\rho(e^{4iP}e^{4iQ}e^{4iR}).
\end{eqnarray}
Let $e^{4iM}\equiv e^{4iP}e^{4iQ}e^{4iR}$ which can be numerically computed efficiently. Then we have $\hat{T}=\exp\left(M_{ij} i \hat{c}_i\hat{c}_j\right)$ with $M\in \mathfrak{so}(2N)$. Using a (complex) orthogonal transformation, we can bring $M$ to a block diagonal form 
\begin{equation}
O^TM O=\mathbf{diag}\left\{
\begin{pmatrix}
0 & -\epsilon_1\\
\epsilon_1 & 0
\end{pmatrix},
\begin{pmatrix}
0 & -\epsilon_2\\
\epsilon_2 & 0
\end{pmatrix},
\ldots,
\begin{pmatrix}
0 & -\epsilon_N\\
\epsilon_N & 0
\end{pmatrix}
\right\}
\end{equation}
where $\{\epsilon_j\}^N_{j=1}$ are complex numbers with non-negative real part, and $\hat{T}$ factorizes into a product of mutually commuting operators. The principal eigenvalue of $\hat{T}$ is%
\begin{eqnarray}
\Lambda_{\max}= e^{2(\epsilon_1+\epsilon_2+\ldots+\epsilon_N)}.\nonumber\\
\end{eqnarray}
The vortex excitation gap $\Delta$ is defined as
\begin{equation}\label{def:vgap}
	\Delta=\max_{(\Phi_x,\Phi_y)}\ln\Lambda_{\mathrm{max}}^{(\Phi_x,\Phi_y)}-\max_{V} \ln \Lambda_\mathrm{max}^{(V)}.
\end{equation}
where $\Lambda_{\mathrm{max}}^{(\Phi_x,\Phi_y)}$ is principal eigenvalue of the vortex free sectors defined in Eq.~\eqref{eq:lamdba_max}, and the second $\mathrm{max}$ is over all vortex configurations $V$.

\subsection{Numerical results}\label{appen:numerical_vgap}
We present numerical results that show the excitation gaps of vortices, as defined by Eq.~\eqref{def:vgap}, remain finite for $L\to\infty$. 

For numerical convenience we use the lattice orientation shown in Fig.~\ref{fig:2dhc_square-sym-1}. %
This slightly changes the finite size results from an $L\times L$ system in Fig.~\ref{fig:brickwall-unitcell}, but the thermodynamic limit remains the same. [Also notice that the  generalized Lieb's theorem in App.~\ref{appen:Lieb} still holds here since we still have reflection positivity with reflection mirrors being vertical bisectors of the $z$-links.]

We limit our numerical study to the region $K_x, K_y, K_z\leq 1.0$. 
There are $2^{L^2-1}$  vortex configurations in total, and it is impractical to study all of them, so we compared a few representative ones, including configurations with a few neighboring vortices, configurations with two far separated vortices, and configurations with a periodic vortex lattice. 
Our result shows that the excitation gap $\Delta$ increases with the number of vortices, and for a fixed number of vortices, $\Delta$ typically increases with their distance. Vortex lattices always have a finite energy density, i.e. $\Delta\propto L^2$.

In short, in all the configurations we have studied, the ones with  smallest excitation gap are configurations with two neighboring vortices, with $\vec{r}_1-\vec{r}_2=(1,-1)$ or $(1,1)$, where $\vec{r}_1$ and $\vec{r}_2$ are positions of the two vortices. %
In Fig.~\ref{fig:E2v_FSE} we show the finite size scaling of the excitation gap $\Delta_{2v}$ of two neighboring vortices with $\vec{r}_1-\vec{r}_2=(1,-1)$, for the $A$-phase, the gapless $B$-phase~($\kappa=0$) and the gapped $B$-phase~($\kappa>0$). (We also studied the finite size scaling of a few other configurations with two or four vortices, and saw similar behaviors). We see that in all cases presented here, $\Delta_{2v}$ converges to a finite positive value when $L\to\infty$, verifying our claim that vortices are always gapped. 
\begin{figure}                                                                                                            
	\center{\includegraphics[width=0.5\linewidth]{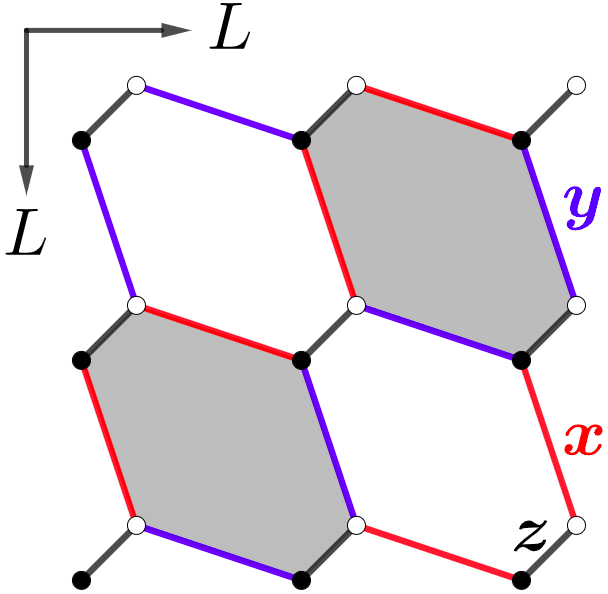}}
	\caption{\label{fig:2dhc_square-sym-1} Lattice geometry for numerical calculation of vortex excitation gaps. Both directions are periodic with length $L$. The two shaded plaquettes are the locations of the two vortices with $\vec{r}_1-\vec{r}_2=(1,-1)$ whose excitation gap is shown in Fig.~\ref{fig:E2v_FSE}. }
\end{figure}
\begin{figure}              
	\center{\includegraphics[width=0.8\linewidth]{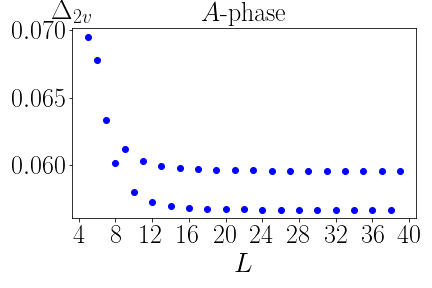}}                                                                                          
	\center{\includegraphics[width=0.8\linewidth]{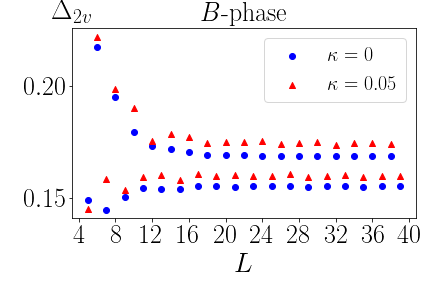}}
	\caption{\label{fig:E2v_FSE} Finite size scaling of vortex pair excitation gap $\Delta_{2v}$ as a function of system size $L$, for a pair of neighboring vortices shown in Fig.~\ref{fig:2dhc_square-sym-1}. Upper: $A_z$-phase at $K_x=K_y=0.4, K_z=1.0$, and $\kappa=0$; Lower: $B$-phases at $K_x=K_y=K_z=0.5$, with blue dots for $\kappa=0$~(where fermion spectrum is gapless) and red triangles for $\kappa=0.05$~(where fermion spectrum is gapped), respectively.}
\end{figure}

\section{Topological degeneracy of the transfer matrix $\hat{T}$ for arbitrary $(L_x,L_y)$}\label{appen:TPD}
In Sec.~\ref{sec:TPloop} we computed the topological degeneracy of $\hat{T}$ when $(L_x,L_y)$ are both even numbers. We treat the slightly more complicated case of arbitrary $(L_x,L_y)$ here. The %
results for the loop observables remain the same. 

For general $(L_x,L_y)$, the mode $(q_x, q_y)$ appears in the sector $(\Phi_x,\Phi_y)=(e^{iq_xL_x}, e^{iq_yL_y})$, for $\vec{q}=(0,0),(0,\pi),(\pi,0),(\pi,\pi)$. The value of $P_{\vec{q}}$ in the fermionic principal state is still determined by maximizing the $\tilde{T}_{0,\vec{q}}$ term in Eq.~\eqref{eq:T_0qxqy}
and we still have $P_{\pi0}=P_{\pi\pi}=-1$, $P_{00}=[K_z>K_x+K_y]$, and $P_{0\pi}=[K_z>|K_x-K_y|]$, where $[S]=+1$ if the statement $S$ is true and $[S]=-1$ otherwise.  We can rewrite the fermion parity constraint Eq.~\eqref{eq:parity_restriction2} as follows
\begin{equation}\label{eq:parity_restriction3}
\prod_{\vec{q}\equiv-\vec{q}} P_{\vec{q}}^{\delta_{\Phi_x,e^{iq_xL_x}}\delta_{\Phi_y,e^{iq_yL_y}}} 
	= \Phi_x^{L_y}(-1)^{(L_x-1)L_y},
\end{equation}
where the product is over all $\vec{q}\in\{(0,0),(0,\pi),(\pi,0),(\pi,\pi)\}$, but the exponent in $P_{\vec{q}}$ ensures that only those $\vec{q}$ belonging to the sector $(\Phi_x,\Phi_y)$ contribute. We can simplify the above equation further by the identity
\begin{equation}\label{eq:Phieiql}
	\delta_{\Phi_\alpha,e^{iq_\alpha L_\alpha}}\equiv \delta_{q_\alpha,\pi} L_\alpha+\delta_{\Phi_\alpha,1}~~~~(\mathrm{mod} ~2).
\end{equation}
Inserting Eq.~\eqref{eq:Phieiql} and the expressions of $P_{00},P_{0\pi},P_{\pi 0},P_{\pi\pi}$ given above into Eq.~\eqref{eq:parity_restriction3}, the fermion parity constraint becomes
\begin{equation}\label{eq:parity_restriction4}
	[K_z>K_x+K_y]^{\delta_{\Phi_x 1}\delta_{\Phi_y 1}}=	[K_z>|K_x-K_y|]^{\delta_{\Phi_x 1}(L_y+\delta_{\Phi_y 1})}.
\end{equation}

We can now determine the degeneracy of different phases using Eq.~\eqref{eq:parity_restriction4}:\\
$A_z$: $[K_z>K_x+K_y]=[K_z>|K_x-K_y|]=+1$, always has 4-fold degeneracy.\\
$A_x,A_y$: $[K_z>K_x+K_y]=[K_z>|K_x-K_y|]=-1$. Has 4-fold degeneracy if $L_y$ is even, and 2-fold degeneracy if $L_y$ is odd with $(++),(+-)$ being parity-incompatible.\\
$B$: $[K_z>K_x+K_y]=-1,[K_z>|K_x-K_y|]=+1$, always has 3-fold degeneracy with $(++)$ being parity-incompatible.\\
The calculation for large loop observables remain the same as  done in the main text, leading to the same results independent of $(L_x,L_y)$. 

\bibliography{3DIsing-Kitaev}
\end{document}